\title{Quantification and Minimization of Crosstalk Sensitivity in Networks}
\author[1,2]{Dionysios Barmpoutis}
\author[2]{Richard M. Murray}
\affil[1]{Computation and Neural Systems}
\affil[2]{Control and Dynamical Systems \bigskip}
\affil[ ]{California Institute of Technology\bigskip}
\affil[ ]{dionysios$@$caltech.edu}
\affil[ ]{murray$@$cds.caltech.edu}
\date{\today}                                        
\newtheorem{lemma}{Lemma}
\newtheorem{theorem}{Theorem}
\newtheorem{corollary}{Corollary}
\theoremstyle{definition}
\newtheorem{definition}{Definition}
\DeclareMathOperator*{\argmin}{arg\,min}
\begin{document}
\maketitle
\begin{abstract}
Crosstalk is defined as the set of unwanted interactions among the different entities of a network.
Crosstalk is present in various degrees in every system where information is transmitted through a means that is accessible by all the individual units of the network.
Using concepts from graph theory, we introduce a quantifiable measure for sensitivity to crosstalk, and analytically derive the structure of the networks in which it is minimized.
It is shown that networks with an inhomogeneous degree distribution are more robust to crosstalk than corresponding homogeneous networks.
We provide a method to construct the graph with the minimum possible sensitivity to crosstalk, given its order and size.
Finally, for networks with a fixed degree sequence, we present an algorithm to find the optimal interconnection structure among their vertices.
\end{abstract}
\section{Introduction}

Crosstalk affects the function of many complex engineering systems.
In microelectronic circuits, as the frequencies increase, it is increasingly hard to ensure proper function without isolating the various components from electromagnetic interference.
In wireless communications, we need to make sure that the input and output signal frequency spectrums are relatively narrow to avoid crosstalk with other communication systems.
In chemical and biomolecular systems, different molecules interact with each other in a solution without explicit mechanisms that bring them in contact.
As a result, there is a potentially large number of spurious interactions, despite which the cell can function without problems.
The potential amount of unwanted crosstalk interactions increases as a function of the number of the individual elements in the network, so there is the need to design systems that are optimized to minimize or at least decrease the amount of spurious interactions, in order to keep the network functional.
In biological networks, this need is more pronounced since signaling is implemented through variations of molecular concentrations.
Crosstalk does not seem to be a problem in the cell, or any other natural biological system, despite the large variety of molecules, and the intricate patterns of interactions. 
However, in biological design, even the simplest systems suffer from crosstalk between different elements, notwithstanding the small complexity of such artificial systems \cite{PurnickWeiss2009}.
If there are molecules that randomly interact with each other without performing any function, the effective concentration of each molecule which is available to participate in any given pathway will tend to decrease, and therefore the efficiency of the circuit will diminish.

In this paper, we find the structure of graphs with the smallest possible sensitivity to crosstalk, assuming that the crosstalk affinity of each vertex is either specific to each unit in the network and independent of anything else, or is a function of its degree, namely the number of other units with which it is designed to interact within the network.
\section{Graph Theory Preliminaries}
This section provides a brief introduction to the notions needed from graph theory that are used throughout this study.
A \textit{graph} (also called a \textit{network}) is an ordered pair $\mathcal{G}=(\mathcal{V},\mathcal{E})$ comprised of a set $\mathcal{V}=\mathcal{V}(\mathcal{G})$ of \textit{vertices} together with a set $\mathcal{E}=\mathcal{E}(\mathcal{G})$ of \textit{edges} that are unordered 2-element subsets of $\mathcal{V}$.
Two vertices $u$ and $v$ are called \textit{neighbors} if they are connected through an edge ($(u,v)\in \mathcal{E}$) and we write $u-v$, otherwise we write $u \notslash v$.
The \textit{neighborhood} $\mathcal{N}_{u}$ of a vertex $u$ is the set its neighbors.
The \textit{degree}  of a vertex is the number of neighbors it has.
A vertex is said to have \textit{full degree} if it is connected to every other vertex in the network.
A network is \textit{assortative} with respect to its degree distribution when the vertices in the network are connected to others that have similar degrees.
All graphs in this article are \textit{simple}, meaning that all edges connect two different vertices, and there is no more than one edge between any pair of vertices.
The \textit{order} of a graph is the number of its vertices, $|\mathcal{V}|$.
A graph's \textit{size} is $|\mathcal{E}|$, the number of edges.
We will denote a graph $\mathcal{G}$ of order $N$ and size $m$ as $\mathcal{G}(N,m)$ or simply $\mathcal{G}_{N,m}$.
A \textit{complete graph} is a graph in which each vertex is connected to every other.
A \textit{clique} in a graph is a subset of its vertices such that every two vertices in the subset are connected.
The \textit{order} of a clique is the number of vertices that belong to it.
A \textit{graphic sequence} is a sequence of numbers which can be the degree sequence of a graph. 
A \textit{path} is a sequence of consecutive edges in a graph and the length of the path is the number of edges traversed.
A graph is \textit{connected} if for every pair of vertices $u$ and $v$, there is a path from $u$ to $v$.
Otherwise the graph is called \textit{disconnected}.
A \textit{tree} is a graph in which any two vertices are connected by exactly one path.
A weighted graph associates a weight with every edge.
Weights will be restricted to positive real numbers.
An edge is \textit{rewired} when we change the vertices it is adjacent to.
A \textit{single rewiring} takes place when we change one of the vertices that is adjacent to it, and a \textit{double rewiring} when we change both of them.
Finally, a subgraph $\mathcal{H}$ of a graph $\mathcal{G}$ is called \textit{induced} if $\mathcal{V}(\mathcal{H}) \subseteq \mathcal{V}(\mathcal{G})$ and for any pair of vertices $u$ and $v$ in $\mathcal{V}(\mathcal{H})$, $(u,v)\in \mathcal{E}(\mathcal{H})$ if and only if $(u,v) \in \mathcal{E}(\mathcal{G})$.
 In other words, $\mathcal{H}$  is an induced subgraph of $\mathcal{G}$ if it has the same edges that appear in $\mathcal{G}$ over the same vertex set. 
 
\section{Model}
\subsection{General Considerations}
Assume that we have a weighted graph that represents a network of interactions.
We define crosstalk as additional unmodeled interactions among the vertices of the network.
These edges are randomly distributed across the network, and typically have smaller weights than the original ones.
Throughout the paper, we denote the order of the network $|\mathcal{V}|$ as $N$, and its size $|\mathcal{E}|$ as $m$.
We require that there cannot be any unwanted interactions among vertices that are already connected in the original network.
The crosstalk matrix, denoted by $R^{xt}$, is the weight matrix of all the crosstalk interactions.
If there is crosstalk between vertices $u$ and $v$, the matrix element $w^{xt}_{u,v}$ will contain the intensity of that interaction.
We will assume that the crosstalk affinity between any pair of vertices $(u,v) \notin \mathcal{E}(\mathcal{G})$, is bounded:
\begin{equation}
||R^{xt}||_{\infty}= \max _{(u,v) \notin \mathcal{E}} w^{xt}_{u,v} \leq \epsilon.
\end{equation}
\begin{definition}
Given a local measure of the crosstalk interaction $w^{xt}_{u,v}=g(u,v)$ among two vertices $u$ and $v$, the \textit{overall sensitivity to crosstalk} of a network of interactions is defined as the sum of the intensities of all the spurious interactions among individual units that are not connected,
\begin{equation}
K(\mathcal{G})=\sum _{(u,v) \notin \mathcal{E}} w^{xt}_{u,v}.
\end{equation}
\end{definition}
When the network in question is obvious from the context, we will refer to its  sensitivity to crosstalk simply as $K$.
The order of the graph is finite, so the overall crosstalk sensitivity is bounded.
\begin{equation}
||R^{xt}|| _{1}= \sum _{(u,v) \notin \mathcal{E}(\mathcal{G})} w_{u,v}^{xt} \leq \left( {N \choose 2} -m\right)\epsilon =\mathcal{M}.
\end{equation}
We will call $\mathcal{M}=\mathcal{M}(N,m,\epsilon)$ the \textit{maximum possible crosstalk intensity} of the network.
It is only a function of the size and order of the network along with the maximum possible crosstalk intensity of any pair of vertices, which depends on the network and is independent of its specific structure.
The matrix $R^{xt}$ is constant, assuming that each crosstalk interaction represents the average over a time period.
\begin{definition}
We say that a graph $\mathcal{G}^{*}=\mathcal{G}^{*}_{N,m}$ has minimum sensitivity to crosstalk, or that it is a \textit{minimum crosstalk graph}, if for any other graph $\mathcal{G}=\mathcal{G}_{N,m}$ of the same order and size, we have
\begin{equation}
K(\mathcal{G}^{*}) \leq K(\mathcal{G}).
\end{equation}
\end{definition}
In what follows, we analyze the crosstalk sensitivity of networks in three different cases.
The first is when the crosstalk affinity of each vertex  depends only on its physical properties, specific to each network.
The second and third cases assume that the crosstalk affinity of each vertex is an increasing function of its degree.
This is a reasonable assumption, especially in biological networks, natural or man-made.
Biological molecules (proteins, for example) which are supposed to interact with many others, have a less specific shape.
This is exactly the reason why they can also randomly interact with many other unrelated molecules.
The same argument holds for molecules that are very specific, and can interact with a small subset of the rest of the molecules in the network.
In DNA systems, if a strand is designed to interact with many others, it is usually longer and, as a result it can stick to many unrelated strands that are floating in the solution.
In communication networks, transceivers that are sensitive to a large spectrum of frequencies are more likely to pick up noise or other random signals from surrounding transmitters, and the opposite is true for transceivers which have a narrow input and output frequency range.
 
\subsection{Crosstalk Specific to Individual Vertices}

In this first case, we assume that the amount of crosstalk depends on the specific vertices of the network and the way they interact with each other.
For this reason, we have no control over their intensity.
For a given number of vertices, the larger the size of the graph, the fewer the pairs of vertices that can spuriously interact with each other, and the overall sensitivity to crosstalk is decreased.
Two extreme cases are the tree graph, when the number of edges is equal to the minimum possible ($m=N-1$, given that the graph is required to be connected) and the complete graph, where $m={N \choose 2}$.
In the tree graph we have ${N \choose 2}-(N-1) ={N-1\choose 2}$ pairs of vertices that may spuriously interact.
The complete graph has no crosstalk interactions, simply because every vertex is designed to interact with every other.
In a network with size $m$, the overall crosstalk sensitivity $K(\mathcal{G})$ will be bounded by these two extremes,
\begin{equation}
0 \leq  K \leq {N-1 \choose 2} ||R^{xt}||_{\infty}.
\end{equation}
We can optimize the structure of this network, by connecting the $m$ pairs of vertices with the strongest interactions among each other, thus reducing $||R^{xt}||_{\infty}$. 

A more interesting problem arises when the intensity of crosstalk interaction is a function of the vertex degrees.
In this case, we denote the crosstalk affinity of vertex $u$ with degree $d_{u}$ and vertex $v$ with degree $d_{v}$ as 
\begin{equation}
w^{xt}_{u,v}=g(d_{u},d_{v})
\end{equation}
 which represents the intensity of the crosstalk interaction between $u$ and $v$.
The function $g$ is assumed to be a strictly increasing function in both its arguments, reflecting our assumption that the more neighbors a vertex has, the more likely it is to interact with vertices it is not connected with.
We also assume that $g$ is symmetric, 
\begin{equation}
g(d_{u},d_{v})=g(d_{v},d_{u}).
\end{equation}
The affinity $f=f(d_{u})$ of a vertex $u$ is similarly defined as its tendency to interact with other vertices of the network with which it is not connected, and consequently not supposed to interact.
The following sections present two scenarios, the first defines $g$ as the sum of the affinity functions of the two vertices, and the other defines $g$ as the product of their affinities.

\section{Pairwise Crosstalk Interactions as the Sum of the Individual Affinities}

If $g(x,y)$ is an additive function, we can write it as
\begin{equation}
g(x,y)=f(x) +f(y)
\end{equation}
where $f:{\Bbb R}^{+} \to {\Bbb R}^{+}$ is defined as the affinity of each individual vertex, and is a strictly increasing function of its degree.
We will refer to this type of crosstalk interactions in the network as \textit{additive crosstalk}.
The overall sensitivity to crosstalk will then be equal to
\begin{align}
K &=\sum _{(u,v) \notin \mathcal{E}(\mathcal{G}) } g(d_{u},d_{v}) \notag \\
  & =\sum _{(u,v) \notin \mathcal{E}(\mathcal{G}) } \left( f(d_{u})+f(d_{v}) \right).
\label{DefinitionAdditiveEquation}
\end{align}

\subsection{General Form of Graphs with Minimum Sensitivity to Crosstalk}

We will first present some lemmas that will help us formulate the main theorem of this section.
\begin{lemma}
Assume that the function $h:{\Bbb R}^{+} \to {\Bbb R}^{+}$ can be written as the product of two functions $f$ and $g$, with $f:{\Bbb R}^{+} \to {\Bbb R}^{+}$ being strictly increasing, differentiable and concave function, and $g:{\Bbb R}^{+} \to {\Bbb R}^{+}$ is strictly decreasing, differentiable and concave.
Then $h$ is strictly concave.
\label{ConcaveProductLemma}
\end{lemma}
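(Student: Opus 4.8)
The plan is to prove strict concavity of $h=fg$ through the first derivative, using the standard fact that a differentiable function is strictly concave precisely when its derivative is strictly decreasing. Thus the goal reduces to showing that $h'=f'g+fg'$ is strictly decreasing on ${\Bbb R}^{+}$, which is attractive because it uses only the differentiability assumed in the statement and never posits a second derivative.

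Before differentiating the product, I would first record a sharpening of the hypotheses: the combination of concavity, strict monotonicity, and differentiability forces the derivatives to be strictly signed everywhere. Indeed, $f$ concave means $f'$ is non-increasing, while $f$ strictly increasing gives $f'\geq 0$; if $f'$ vanished at some point $a$, then $f'\leq 0$ on all of $(a,\infty)$ by monotonicity, so $f'\equiv 0$ there and $f$ would be constant on a ray, contradicting strict increase. Hence $f'>0$ throughout, and the symmetric argument yields $g'<0$ throughout. This observation is what ultimately delivers strict rather than merely weak concavity.

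With $f'>0$ and $g'<0$ in hand, I would then show separately that each summand of $h'$ is strictly decreasing. For $f'g$, the factor $f'$ is positive and non-increasing while $g$ is positive and strictly decreasing, so for $x_{1}<x_{2}$ the chain $f'(x_{1})g(x_{1})\geq f'(x_{2})g(x_{1})>f'(x_{2})g(x_{2})$ gives strict decrease. For $fg'$, the factor $f$ is positive and strictly increasing while $g'$ is negative and non-increasing, and the analogous two-step comparison (using that multiplication by the negative quantity $g'(x_{2})$ reverses the inequality $f(x_{1})<f(x_{2})$) again yields strict decrease. A sum of two strictly decreasing functions is strictly decreasing, so $h'$ is strictly decreasing and $h$ is strictly concave.

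The main obstacle is entirely a matter of strictness rather than of the direction of the inequality. The heuristic computation $h''=f''g+2f'g'+fg''$ makes the result transparent, since $f''g\leq 0$, $fg''\leq 0$, and the cross term $2f'g'$ is strictly negative; but this presupposes a second derivative that the lemma does not grant, and it also glosses over points where $f''$ or $g''$ might momentarily vanish. The derivative-monotonicity route sidesteps both difficulties: the preliminary observation that $f'$ and $g'$ never vanish removes any degenerate points, and packaging the argument as the product of a positive non-increasing function with a positive strictly monotone function is exactly the structure needed to upgrade monotonicity from weak to strict. Care must only be taken to keep the signs straight in the $fg'$ term, where one factor is negative.
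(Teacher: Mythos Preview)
Your argument is correct and is in fact tighter than the paper's own proof. The paper proceeds by computing
\[
h''(x)=f''(x)g(x)+2f'(x)g'(x)+f(x)g''(x)
\]
and observing that each summand is non-positive, with the middle term strictly negative because $f'>0$ and $g'<0$; hence $h''<0$. This is shorter but tacitly uses a second derivative that the statement never assumes, and it also quotes $f'>0$, $g'<0$ without the justification you supply. Your route---showing that $h'=f'g+fg'$ is strictly decreasing by splitting it into two products, each of which is a positive monotone factor times a strictly monotone factor of the required sign---avoids the extra smoothness assumption and makes the source of the \emph{strict} inequality explicit. The trade-off is a slightly longer bookkeeping step for the $fg'$ term, where one factor is negative; you handle the sign reversal correctly. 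In short, both arguments work when $f$ and $g$ are $C^{2}$, but only yours matches the hypotheses as stated.
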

\begin{proof}
The second derivative of $h$ is
\begin{equation}
h''(x) =f''(x)g(x)+2f'(x)g'(x)+f(x)g''(x).
\end{equation}
All three products of the right hand side are negative, since $f(x)>0$, $f'(x)> 0$, $f''(x) \leq 0$, $g(x)>0$, $g'(x) < 0$ and $g''(x) \leq 0$, therefore $h''(x)<0$.
\end{proof}

\begin{lemma}
Assume we have a differentiable function $f:{\Bbb R}^{+} \to {\Bbb R}^{+}$, with $f'(x)<0$ and $f''(x) \leq 0$ for every $x \in \mathbb{R}^{+}$.
Further assume that for $a,b,c,d \in \mathbb{R}^{+}$, we have  $0\leq a< c\leq d < b<\infty$, and for which $a+b=c+d$.
Then $f(a)+f(b)\leq f(c)+f(d)$.
\label{ConcaveLemma}
\end{lemma}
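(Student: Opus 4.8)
The plan is to reduce the claimed inequality to a pointwise comparison of the derivative $f'$, exploiting that $f'$ is non-increasing under the hypothesis $f''\le 0$. The key structural observation is that the constraint $a+b=c+d$ forces the two ``gaps'' flanking the balanced pair to have equal width: setting $\delta = c-a$, we also have $\delta = b-d$, and $\delta>0$ because $a<c$. Conceptually, $(a,b)$ majorizes $(c,d)$ since the two pairs share a sum while $(c,d)$ is the more balanced one ($a<c\le d<b$), and a concave $f$ should reverse this ordering; the argument below makes this concrete without invoking majorization machinery.

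First I would rewrite the target inequality $f(a)+f(b)\le f(c)+f(d)$ in the equivalent grouped form $f(c)-f(a)\ge f(b)-f(d)$, pairing the two endpoints that are each shifted by the common amount $\delta$. Since $f$ is differentiable with $f''\le 0$, the derivative $f'$ is continuous and non-increasing on $\mathbb{R}^{+}$, so the fundamental theorem of calculus applies and I can express both sides as integrals over an interval of the same length $\delta$:
\[
f(c)-f(a)=\int_{0}^{\delta} f'(a+u)\,du, \qquad f(b)-f(d)=\int_{0}^{\delta} f'(d+u)\,du.
\]

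The decisive step is then a termwise comparison of the two integrands. Because $a<c\le d$ we have $a<d$, hence $a+u<d+u$ for every $u\in[0,\delta]$, and the monotonicity of $f'$ (which is exactly what $f''\le 0$ provides) gives $f'(a+u)\ge f'(d+u)$. Integrating this pointwise inequality over $[0,\delta]$ yields $f(c)-f(a)\ge f(b)-f(d)$, which is the desired conclusion.

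I do not expect a genuine obstacle here; the only points requiring care are (i) verifying that the two integration intervals really do have equal length, which is precisely where $a+b=c+d$ enters, and (ii) confirming that the shifted arguments remain ordered so that the monotonicity of $f'$ can be applied inside the integral. It is worth remarking that the argument uses only concavity and differentiability: the strict negativity $f'<0$ and the positivity of $f$ are not needed for this weak inequality, and equality occurs exactly when $f$ is affine on $[a,b]$. An equivalent one-line route is to set $s=a+b$ and note that $\phi(t)=f(s/2-t)+f(s/2+t)$ satisfies $\phi'(t)=f'(s/2+t)-f'(s/2-t)\le 0$, so $\phi$ is non-increasing; since $(c,d)$ is the more balanced pair it corresponds to the smaller value of $t$ and hence the larger value of $\phi$.
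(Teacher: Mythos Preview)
Your proof is correct and follows essentially the same route as the paper: both rewrite the target as $f(c)-f(a)\ge f(b)-f(d)$, use $a+b=c+d$ to equate the gap widths $c-a=b-d$, and then exploit the monotonicity of $f'$ coming from $f''\le 0$. The only cosmetic difference is that the paper applies the mean value theorem to each increment and compares $f'(\xi_1)\ge f'(\xi_2)$, whereas you express the increments as integrals of $f'$ over equal-length intervals and compare the integrands pointwise.
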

\begin{proof}
According to the mean value theorem, there exist numbers $a \leq \xi _{1} \leq c$ and $d \leq \xi _{2} \leq b$ such that 
\begin{equation}
f^{'}(\xi _{1})=\frac{f(c)-f(a)}{c-a}
\end{equation}
and 
\begin{equation}
f^{'}(\xi _{2})=\frac{f(b)-f(d)}{b-d}.
\end{equation}
Since $f$ is concave,
\begin{equation}
f'(\xi_{2}) \leq f'(\xi_{1}).
\end{equation}
Replacing these derivatives from the equations above, we get
\begin{equation}
\frac{f(b)-f(d)}{b-d} \leq \frac{f(c)-f(a)}{c-a} .
\end{equation}
Since $a+b=c+d$, the denominators are equal by assumption, 
\begin{equation}
c-a=b-d
\end{equation}
and as a consequence
\begin{equation}
f(b)-f(d) \leq f(c)-f(a).
\end{equation}
Rearranging the terms, the result follows:
\begin{equation}
f(a)+f(b)  \leq f(c)+f(d).
\end{equation}
\end{proof}

\begin{definition}
The total affinity $h_{u}$ of a vertex $u$ is equal to the sum of the affinities of $u$ to all the vertices of the network that it is \textit{not} connected with. 
\end{definition}

\begin{theorem}
Assume that the function $f:{\Bbb R}^{+} \to {\Bbb R}^{+}$ that defines the affinity of each vertex of the network is a strictly increasing  concave function of the vertex degree.
Then the total affinity of every vertex $u$ is equal to 
\begin{equation}
h_{u}=(N-1-d_{u})f(d_{u})
\end{equation}
and the degree distribution of the network with the minimum additive sensitivity to crosstalk is inhomogeneous.
\end{theorem}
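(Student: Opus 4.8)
The plan is to treat the two assertions separately. The closed form for $h_u$ follows from a counting argument: vertex $u$ has $d_u$ neighbours among the $N-1$ other vertices, hence exactly $N-1-d_u$ non-neighbours; in the additive model the part of the pairwise affinity $f(d_u)+f(d_v)$ attributable to $u$ is its own affinity $f(d_u)$, so summing this contribution over all non-neighbours of $u$ yields $h_u=(N-1-d_u)f(d_u)$ at once.

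Next I would recast the global sensitivity. Accounting in Equation~(\ref{DefinitionAdditiveEquation}) for each non-adjacent pair $\{u,v\}$ vertex by vertex rather than pair by pair, the term $f(d_u)$ occurs once for each of the $N-1-d_u$ non-neighbours of $u$, whence $K=\sum_u (N-1-d_u)f(d_u)=\sum_u h_u$. Minimizing $K$ over all graphs $\mathcal{G}_{N,m}$ is therefore the same as minimizing $\sum_u \phi(d_u)$, where $\phi(d)\colonequals(N-1-d)f(d)$ and the degrees are constrained only by $\sum_u d_u=2m$ together with graphicality.

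The heart of the argument is the shape of $\phi$. Writing $\phi=f\cdot g$ with $g(d)=N-1-d$, I note that $g$ is strictly decreasing, differentiable and concave with $g>0$ for $d<N-1$, so Lemma~\ref{ConcaveProductLemma} applies and $\phi$ is strictly concave. Strict concavity yields the midpoint inequality $\phi(k-1)+\phi(k+1)<2\phi(k)$ whenever $k-1,k,k+1$ lie in $[0,N-1]$, so replacing two equal degrees $k,k$ by the spread pair $k-1,k+1$ of the same total strictly lowers $\sum_u\phi(d_u)$ and hence $K$. I would realize this transfer by a single rewiring: in a regular graph that is not complete, pick an edge $(a,b)$ and a non-neighbour $c$ of $a$, and rewire $(a,b)$ to $(a,c)$; this keeps $d_a$ fixed, sends $d_b\mapsto k-1$ and $d_c\mapsto k+1$, and preserves both simplicity and $m$. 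Since such a rewiring is available for every non-complete regular graph, no homogeneous graph can be a minimum crosstalk graph, which is exactly the claim.

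The main obstacle I anticipate is not the concavity estimate but the feasibility bookkeeping: I must guarantee that a non-neighbour $c$ of $a$ with $c\neq b$ exists, which holds because $d_a=k<N-1$ leaves at least one non-neighbour and every non-neighbour of $a$ is automatically distinct from the neighbour $b$; and I must apply the product lemma only where $g>0$, the boundary degree $N-1$ causing no difficulty since there $\phi=0$ while a direct computation gives $\phi''(N-1)=-2f'(N-1)<0$, so strict concavity persists up to the endpoint. Degenerate cases admitting no non-regular realization of the prescribed $(N,m)$, such as the complete graph with $K=0$, fall outside the scope of the claim.
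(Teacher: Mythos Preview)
Your proposal is correct and follows essentially the same route as the paper: derive $h_u=(N-1-d_u)f(d_u)$ by counting non-neighbours, rewrite $K=\sum_u h(d_u)$, apply Lemma~\ref{ConcaveProductLemma} to obtain strict concavity of $d\mapsto(N-1-d)f(d)$, and conclude that spreading degrees lowers $K$. The only difference is cosmetic: the paper closes by invoking Lemma~\ref{ConcaveLemma} to push the optimal degrees toward the endpoints of inequality~(\ref{PlausibleDegreesInequality}), whereas you exhibit an explicit single rewiring on a regular graph---and your treatment of the boundary case $d=N-1$ is, if anything, more careful than the paper's.
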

\begin{proof}
As shown in equation (\ref{DefinitionAdditiveEquation}), the crosstalk intensity between any pair of vertices is only a function of their degrees.
Furthermore, the affinity of each vertex is only a function of its degree,  and independent of the vertices it interacts with.
As a result,
\begin{equation}
h_{u}=\sum _{v \notin \mathcal{N}_{u}}f(d_{u})  \implies h_{u}= (N-1-d_{u})f(d_{u}).
\end{equation}
Since $h_{u}$ is also a function of the vertex degree, we will denote the total affinity as $h(d_{u})$.
From Lemma \ref{ConcaveProductLemma} , it is easy to see that $h$ is a positive, strictly increasing and concave function, since $N-1-d_{u}$ is a decreasing positive linear function for $1\leq d_{u} \leq N-1$ and $f$ is assumed to be positive, increasing and concave.

The network which is most robust to crosstalk will be the one with degree distribution $\mathbf{d}^{*}=[d_{1}, d_{2}, \hdots , d_{N}]$   such that 
\begin{equation}
\mathbf{d} ^{*} =\argmin _{\mathbf{d} \in \mathcal{P}}\left\{  \sum _{(u,v) \notin \mathcal{E}(\mathcal{G}) }  \left( f(d_{u})+f(d_{v}) \right)   \right\} 
\end{equation}
under the constraints
\begin{equation}
1 \leq d_{u} \leq N-1 \quad  \forall u \in \mathcal{V}(\mathcal{G})
\label{PlausibleDegreesInequality}
\end{equation}
\begin{equation}
\sum _{u=1}^{N}d_{u}=2m \geq 2(N-1)
\end{equation}
where $\mathcal{P}$ is the space of graphic degree sequences for connected graphs.
Rewriting equation (\ref{DefinitionAdditiveEquation}), we get:
\begin{align}
K =&\sum _{(u,v) \notin \mathcal{E}(\mathcal{G}) }  \left( f(d_{u})+f(d_{v}) \right) \notag \\
=&\sum _{u \in \mathcal{V}(\mathcal{G}) }  (N-1-d_{u}) f(d_{u}) \label{VertexOnlyEquation} \\
=&\sum _{u \in \mathcal{V}(\mathcal{G}) }  h(d_{u}) \notag.
\end{align}
The function $K$ is a sum of concave functions, and thus concave.
According to Lemma \ref{ConcaveLemma}, the optimal $\mathbf{d}^{*}$ will be achieved for $d_{u}$'s at the boundaries of the plausible values in the inequality (\ref{PlausibleDegreesInequality}).
\end{proof}

\begin{corollary}
When the crosstalk affinity between two vertices is an additive function of their degrees, the overall crosstalk sensitivity only depends on the vertex degrees, and does not depend on the structure of the network. 
In other words, how robust the network is to crosstalk only depends on its degree distribution.
\end{corollary}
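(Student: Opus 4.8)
The plan is to argue directly from the reformulation of $K$ established in equation (\ref{VertexOnlyEquation}), which already expresses the overall crosstalk sensitivity as a single sum over the vertex set rather than over the pairs of non-adjacent vertices. The core observation is that the additive form $g(d_u,d_v)=f(d_u)+f(d_v)$ lets the double sum over non-edges decouple into independent per-vertex contributions, and it is exactly this decoupling that severs the dependence on structure.

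First I would make the combinatorial counting step explicit. In the sum $\sum_{(u,v)\notin\mathcal{E}}\left(f(d_u)+f(d_v)\right)$, the term $f(d_u)$ is contributed exactly once for each vertex $v$ that is \emph{not} a neighbor of $u$. Since $u$ has $d_u$ neighbors among the $N-1$ other vertices, it has precisely $N-1-d_u$ non-neighbors, so $f(d_u)$ appears $N-1-d_u$ times in total. Collecting terms vertex by vertex yields $K=\sum_{u\in\mathcal{V}(\mathcal{G})}(N-1-d_u)f(d_u)$, which is exactly equation (\ref{VertexOnlyEquation}).

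Next I would observe that the right-hand side is manifestly a function of the multiset of degrees $\{d_u\}_{u\in\mathcal{V}}$ alone: it references neither the identities of the adjacent pairs nor any other structural feature of $\mathcal{G}$. Consequently, if two graphs $\mathcal{G}_1$ and $\mathcal{G}_2$ share the same degree sequence, then $K(\mathcal{G}_1)=K(\mathcal{G}_2)$, which is precisely the assertion that $K$ depends only on the degree distribution and is invariant under any rewiring that preserves every vertex degree. This establishes the corollary.

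There is no serious analytic obstacle here, since the heavy lifting was already done in deriving (\ref{VertexOnlyEquation}); the only point requiring care is to justify the decoupling cleanly and to stress that it hinges on \emph{additivity} rather than monotonicity. I would therefore close by contrasting with the multiplicative case: for $g(d_u,d_v)=f(d_u)f(d_v)$ the analogous sum can be written as $\sum_{u<v}f(d_u)f(d_v)-\sum_{(u,v)\in\mathcal{E}}f(d_u)f(d_v)$, whose second term ranges over the actual edges and therefore genuinely depends on the interconnection structure. This makes transparent that separability of $g$, and not merely its being increasing, is what drives the structure-independence claimed in the corollary.
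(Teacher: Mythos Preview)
Your proposal is correct and follows essentially the same approach as the paper: both simply read off the conclusion from equation~(\ref{VertexOnlyEquation}), noting that the right-hand side depends only on the multiset of degrees. Your version is more explicit about the counting and adds a helpful contrast with the multiplicative case, but the core argument is identical.
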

\begin{proof}
From equation (\ref{VertexOnlyEquation}), the overall sensitivity of the network only depends on the degree distribution, and is independent of its structure.
\end{proof}

\begin{corollary}
The tree graph  that is most robust to additive crosstalk is the star graph.
\end{corollary}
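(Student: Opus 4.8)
The plan is to specialize the preceding theorem to the case $m=N-1$, in which the admissible graphs are exactly the trees on $N$ vertices. First I would recall from equation~(\ref{VertexOnlyEquation}) that for any such graph $K=\sum_{u}h(d_u)$ with $h(d)=(N-1-d)f(d)$, a function that is \emph{strictly} concave by Lemma~\ref{ConcaveProductLemma} (take the decreasing concave factor to be $N-1-d$, whose derivative is $-1$, so the cross term is strictly negative). Every tree degree sequence satisfies the single constraint $\sum_{u}d_u=2(N-1)$ together with $1\le d_u\le N-1$, so minimizing $K$ over trees amounts to minimizing a fixed strictly concave separable function over the tree-realizable degree sequences of fixed total degree.

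Next I would invoke the transfer mechanism already recorded in Lemma~\ref{ConcaveLemma}: replacing a pair of degrees $(c,d)$ by a more spread-out pair $(a,b)$ with $a<c\le d<b$ and $a+b=c+d$ can only decrease the sum, since $h(a)+h(b)\le h(c)+h(d)$. Iterating such transfers drives the minimizer toward the most inhomogeneous degree sequence permitted, namely the one whose values sit at the boundaries of~(\ref{PlausibleDegreesInequality}). The degree-sum constraint forces this extremal sequence to be $(N-1,1,1,\ldots,1)$, and this sequence is realized by a unique tree: a vertex of full degree $N-1$ is adjacent to all others, after which no further edge may be added without creating a cycle, so the graph is the star.

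The key step, and the main obstacle, is to show rigorously that the star's sequence is extremal, i.e.\ that it majorizes every other tree degree sequence; the iterated-transfer picture only suggests this but does not by itself guarantee that the intermediate sequences stay tree-realizable. I would establish the majorization directly by an edge count: for any set $S$ of $k$ vertices in a tree, $\sum_{v\in S}d_v=2\,e(S)+e(S,\bar{S})$, and since the subgraph induced on $S$ is a forest with $e(S)\le k-1$ while $e(S,\bar{S})\le (N-1)-e(S)$, we obtain $\sum_{v\in S}d_v\le (k-1)+(N-1)=N+k-2$, which is exactly the $k$-th partial sum of the sorted star sequence. Hence the star majorizes every tree degree sequence, and the strict concavity of $h$ makes $K$ strictly Schur-concave, so $K(\text{star})\le K(T)$ for every tree $T$, with equality only when $T$ is the star itself. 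This proves that the star is the unique most robust tree.
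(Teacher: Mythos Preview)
Your argument is correct, but it takes a genuinely different route from the paper's proof. The paper does not use concavity of $h$ or majorization at all; it simply computes $K_{\text{star}}=2\binom{N-1}{2}f(1)$ directly, observes that any non-star tree has all degrees in $\{1,\dots,N-2\}$ with at least one degree $\geq 2$, and then bounds
\[
K_{\text{alt}}=\sum_{u}(N-1-d_u)f(d_u)>f(1)\sum_{u}(N-1-d_u)=f(1)\bigl(N(N-1)-2(N-1)\bigr)=K_{\text{star}},
\]
using nothing but the strict monotonicity of $f$. Your approach instead proves that the star degree sequence majorizes every tree degree sequence (your edge-count bound $\sum_{v\in S}d_v\le N+k-2$ is clean and correct) and then appeals to strict Schur-concavity of $\sum_u h(d_u)$. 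What this buys you is a reusable structural fact---the majorization statement holds for \emph{any} concave $h$, and would immediately locate the extremal tree for any Schur-concave objective---at the cost of invoking more machinery than is needed here. The paper's proof is shorter and uses a weaker hypothesis (only monotonicity of $f$), but is tailored to this specific corollary.
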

\begin{proof}
The star network has one vertex with full degree and this vertex has no crosstalk interactions.
The $N-1$ peripheral vertices have degree $1$, and may spuriously interact with each other.
Hence,
\begin{align}
K_{star} =&\sum _{u \in \mathcal{V}(\mathcal{G}) }  (N-1-d_{u}) f(d_{u}) \notag \\
=&(N-1)(N-2) f(1)+ (N-1)\cdot 0\\
=&2 {N-1 \choose 2} f(1). \notag
\end{align}

Every other graph will have vertices with degrees $1\leq d \leq N-2$ with at least one vertex with degree $d \geq 2$.
The total number of crosstalk interactions is equal to twice the number of edges that are \textit{not} present in the network, namely $2{N-1 \choose 2}$.
The sum of all the crosstalk interactions $K_{alt}$ will then be 
\begin{align}
K_{alt} &=\sum _{u \in \mathcal{V}(\mathcal{G}) }  (N-1-d_{u}) f(d_{u}) \notag \\
&>f(1) \sum _{u \in \mathcal{V}(\mathcal{G}) }  (N-1-d_{u}) \notag \\
&=f(1) (N(N-1)-2(N-1)) \\
&=2 {N-1 \choose 2} f(1) \notag \\
&=K_{star}. \notag
\end{align}
\end{proof}

\begin{corollary}
Suppose that a vertex $u$ is connected to vertex $v$ with degree $d_{v}$, and is not connected to vertex $w$ with degree $d_{w}$.
If $d_{v} \leq d_{w}$, and we can rewire the edge $(u,v)$ to $(u,w)$ while keeping the graph connected, then the sensitivity of the new network to additive crosstalk is smaller.
\label{IncreasingDegreesCorollary}
\end{corollary}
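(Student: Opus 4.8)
The plan is to compute the change in the overall sensitivity $K$ caused by the single rewiring $(u,v)\to(u,w)$ and to show that this change is strictly negative. The key observation is that equation (\ref{VertexOnlyEquation}) expresses the sensitivity purely as a sum over vertices, $K=\sum_{u}h(d_{u})$ with $h(x)=(N-1-x)f(x)$, so I only need to track which vertex degrees the rewiring alters. Deleting the edge $(u,v)$ and inserting $(u,w)$ leaves the degree of $u$ unchanged, since $u$ merely trades one neighbor for another; it decreases the degree of $v$ from $d_{v}$ to $d_{v}-1$; it increases the degree of $w$ from $d_{w}$ to $d_{w}+1$; and it fixes every other degree. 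Consequently the net change reduces to the two affected terms,
\begin{equation}
\Delta K=\bigl[h(d_{v}-1)+h(d_{w}+1)\bigr]-\bigl[h(d_{v})+h(d_{w})\bigr],
\end{equation}
so it suffices to prove $\Delta K<0$.

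To invoke Lemma \ref{ConcaveLemma} I would set $a=d_{v}-1$, $c=d_{v}$, $d=d_{w}$, and $b=d_{w}+1$. These satisfy $a+b=c+d=d_{v}+d_{w}$, and the hypothesis $d_{v}\leq d_{w}$ together with the connectivity of the graph (so $d_{v}\geq 1$, giving $a\geq 0$) yields the required ordering $0\leq a<c\leq d<b<\infty$. Applying the lemma to the total-affinity function $h$ then gives $h(a)+h(b)\leq h(c)+h(d)$, which is precisely $\Delta K\leq 0$.

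The one wrinkle to address is that Lemma \ref{ConcaveLemma} is stated for a strictly \emph{decreasing} concave function, whereas the total affinity $h$ is strictly \emph{increasing}; this sign/monotonicity mismatch is the only genuine obstacle. I would resolve it by observing that the lemma's proof uses only the monotonicity of the derivative, i.e. concavity, and never the sign of $f'$, so it transfers verbatim to the increasing concave $h$; equivalently, one simply appeals to the strict concavity of $h$ directly. Since $h$ is \emph{strictly} concave (Lemma \ref{ConcaveProductLemma}) and $a<b$ with $\{a,b\}\neq\{c,d\}$, the inequality is in fact strict, so $\Delta K<0$ and the rewired network has strictly smaller sensitivity to additive crosstalk. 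Everything beyond this observation is routine bookkeeping of the three affected degrees and an appeal to the concavity already established.
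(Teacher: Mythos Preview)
Your proof is correct and essentially identical to the paper's: both observe that only the degrees of $v$ and $w$ change, reduce $\Delta K$ to $h(d_{v}-1)+h(d_{w}+1)-h(d_{v})-h(d_{w})$, and conclude negativity from the strict concavity of $h$ established in Lemma~\ref{ConcaveProductLemma}. The only cosmetic difference is that you route the final inequality through Lemma~\ref{ConcaveLemma} (correctly noting and patching its irrelevant sign hypothesis), whereas the paper invokes concavity directly in the decreasing-differences form $h(d_{w}+1)-h(d_{w})<h(d_{v})-h(d_{v}-1)$.
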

\begin{proof}
The only vertices whose affinities change are $v$ and $w$, since the degree of $u$, along with any other vertex in the network, stays the same.
Thus,
\begin{align}
\Delta K =&K_{new}-K_{old} \notag\\
=&(N-d_{v})f(d_{v}-1)+(N-2-d_{w})f(d_{w}-1) \notag \\
&\quad -(N-1-d_{v})f(d_{v})-(N-1-d_{w})f(d_{w})\\
=& -\left[ h(d_{v})-h(d_{v}-1) \right]+\left[ h(d_{w}+1)-h(d_{w}) \right] \notag \\
<&0. \notag
\end{align}
The last inequality holds because $h$ is concave (Lemma \ref{ConcaveProductLemma}) and $d_{v}<d_{w}$.
\end{proof}

\begin{definition}
We define $\mathcal{N}_{x-y}$ as the set of all the neighbors of vertex $x$ except vertex $y$, such that 
\begin{equation}
\mathcal{N}_{x-y}= \left\{ \begin{array}{ll}
\mathcal{N}_{x} & \textrm{if $y \notin \mathcal{N}_{x}$}\\
\mathcal{N}_{x} - \{ y\} & \textrm{if $y \in  \mathcal{N}_{x}$}.\\
\end{array} \right.
\end{equation}
\end{definition}

\begin{corollary}
In a graph with the minimum crosstalk sensitivity, for every pair of vertices $a$ and $b$, we have
\begin{equation}
d_{a}\leq d_{b} \iff   \mathcal{N}_{a-b} \subseteq \mathcal{N}_{b-a}.
\label{DegreeSubsetCondition}
\end{equation}
\end{corollary}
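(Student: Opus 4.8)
The statement is an equivalence whose two implications are of completely different character, so the plan is to treat them separately. The direction $\mathcal{N}_{a-b} \subseteq \mathcal{N}_{b-a} \implies d_a \leq d_b$ needs no appeal to minimality: I would simply compare cardinalities. If $(a,b) \in \mathcal{E}(\mathcal{G})$, then $|\mathcal{N}_{a-b}| = d_a - 1$ and $|\mathcal{N}_{b-a}| = d_b - 1$, whereas if $(a,b) \notin \mathcal{E}(\mathcal{G})$, then $|\mathcal{N}_{a-b}| = d_a$ and $|\mathcal{N}_{b-a}| = d_b$; in both cases the inclusion forces $d_a \leq d_b$. Applied to the case $d_a = d_b$, this also shows that two vertices of equal degree must have identical punctured neighborhoods, $\mathcal{N}_{a-b} = \mathcal{N}_{b-a}$, so the equivalence is consistent under the exchange $a \leftrightarrow b$.

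The substantive direction is $d_a \leq d_b \implies \mathcal{N}_{a-b} \subseteq \mathcal{N}_{b-a}$, and here I would argue by contradiction from minimality. Assume $d_a \leq d_b$ but the inclusion fails; then there is a vertex $c \neq a,b$ with $(a,c) \in \mathcal{E}(\mathcal{G})$ and $(b,c) \notin \mathcal{E}(\mathcal{G})$. The natural move is to detach $c$ from the lower-degree endpoint $a$ and reattach it to the higher-degree endpoint $b$, i.e.\ replace $(a,c)$ by $(b,c)$. Taking $u=c$, $v=a$, $w=b$ in Corollary \ref{IncreasingDegreesCorollary} and using $d_a \leq d_b$, this rewiring strictly decreases the additive crosstalk; because $h$ is strictly concave (Lemma \ref{ConcaveProductLemma}), the decrease is strict even when $d_a = d_b$. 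Since the crosstalk sensitivity depends only on the degree sequence, the rewired graph has a strictly smaller $K$, contradicting the minimality of $\mathcal{G}$ — and, notably, this confirms that the reducing move necessarily alters the degree sequence.

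The one gap in this argument, and the step I expect to be the main obstacle, is that Corollary \ref{IncreasingDegreesCorollary} requires the rewiring to preserve connectivity, whereas $\mathcal{G}$ is only minimal among \emph{connected} graphs. The move $(a,c)\mapsto(b,c)$ can disconnect $\mathcal{G}$ precisely when $(a,c)$ is a bridge and $b$ sits in the same component as $c$ after $(a,c)$ is deleted. I would handle this by a short case split. If $(a,c)$ is not a bridge, or if $b$ lies on the $a$-side of it, the move above already works. Otherwise I would instead relocate $a$'s own edge, replacing $(a,c)$ by $(a,b)$ (legitimate since $(a,b)\notin\mathcal{E}$ in this case): deleting the bridge separates $a$ from $b$, so adding $(a,b)$ restores connectivity, and by Corollary \ref{IncreasingDegreesCorollary} with $u=a$, $v=c$, $w=b$ this decreases $K$ whenever $d_c \leq d_b$.

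The genuinely delicate residual case is $(a,c)$ a bridge, $b$ on $c$'s side, and $d_c > d_b$. Here I would exploit that $d_b < d_c$ with $(b,c)\notin\mathcal{E}$ produces, unless $\mathcal{N}_b \subseteq \mathcal{N}_c$, a fresh crossing on the pair $(b,c)$ that can be uncrossed by a move toward the higher-degree vertex $c$; and if instead $\mathcal{N}_b \subseteq \mathcal{N}_c$, the configuration is not degree-extremal (the bridge path through $a$ admits a shortcut), so a reducing move exists elsewhere. Making this final reduction airtight — showing that no connected realization of the minimizing degree sequence can force \emph{every} improving rewiring to disconnect — is where the real work lies. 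The cleanest route is probably to first establish that the minimizing (maximally skewed) degree sequence admits an essentially unique, threshold, connected realization, whose nested neighborhoods then give the inclusion directly.
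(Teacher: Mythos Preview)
Your strategy is essentially the paper's. The implication $\mathcal{N}_{a-b}\subseteq\mathcal{N}_{b-a}\Rightarrow d_a\le d_b$ is the same cardinality count (the paper phrases it with a Kronecker $\delta$), and the converse is argued by contradiction through a crosstalk-reducing rewiring and the strict concavity of $h$. The paper's rewiring differs in detail from yours: instead of your direct move $(a,c)\mapsto(b,c)$, it picks $M$ of maximum degree in $\mathcal{N}_a\cap\mathcal{N}_b^{\,c}$, argues that minimality forces both $a$ and $b$ to be joined to every vertex of degree exceeding $d_M$, deduces that $b$ must then have a neighbour $P$ with $d_P=d_M=\theta$ standing in for $M$, and rewires $(b,P)\mapsto(b,M)$ to obtain $\Delta K=h(\theta{+}1)+h(\theta{-}1)-2h(\theta)<0$. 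Your single move is the simpler of the two; the mechanism is the same.

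Regarding the connectivity hypothesis in Corollary~\ref{IncreasingDegreesCorollary}: you are being more scrupulous than the paper, which does not address it in this proof at all. Your bridge case-split is heavier than necessary, though. A cleaner route uses equation~(\ref{VertexOnlyEquation}): since $K$ depends only on the degree sequence, it suffices that the post-rewiring sequence $\mathbf{d}'$ be realised by \emph{some} connected graph. The rewiring already realises $\mathbf{d}'$ by a simple graph, so $\mathbf{d}'$ is graphic with $\sum_i d'_i=2m\ge 2(N-1)$; provided every entry of $\mathbf{d}'$ stays positive, it is a standard fact that such a sequence admits a connected realisation, and that graph has strictly smaller $K$ than $\mathcal{G}$. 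The only thing to avoid is dropping an entry to $0$, which simply means never stripping the sole edge of a degree-$1$ vertex---a constraint that is easy to respect (e.g.\ if $d_a=1$, use the paper's move on $b$ instead) and dissolves your ``delicate residual case'' without any threshold-graph detour.
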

\begin{proof}
Counting the elements of the sets $\mathcal{N}_{a-b}$ and $\mathcal{N}_{b-a}$, 
\begin{equation}
\mathcal{N}_{a-b} \subseteq \mathcal{N}_{b-a} \implies |\mathcal{N}_{a-b}|\leq |\mathcal{N}_{b-a}|.
\end{equation}
In addition,
\begin{equation}
|\mathcal{N}_{a}|=|\mathcal{N}_{a-b}|+\delta((a,b)\in \mathcal{E}), \quad |\mathcal{N}_{b}|=|\mathcal{N}_{b-a}|+\delta((a,b)\in \mathcal{E})
\end{equation}
where $\delta$ is the Kronecker delta function, defined as
\begin{equation}
\delta((u,v) \in \mathcal{E}) = \left\{ \begin{array}{ll}
1 & \textrm{if $u$ and $v$ are connected}\\
0 & \textrm{otherwise.}
\end{array} \right.
\end{equation}
As a result,  
\begin{equation}
d_{a}=|\mathcal{N}_{a} |\leq |\mathcal{N}_{b}| =d_{b}.
\end{equation}
The necessity of condition (\ref{DegreeSubsetCondition}) is proved, and we will next show the sufficiency of the statement by assuming otherwise and arriving at a contradiction.

Suppose that $d_{a} \leq d_{b}$, but $\mathcal{N}_{a-b} \nsubseteq \mathcal{N}_{b-a}$.
This means that there is at least one vertex that is connected to $a$, but is not connected to $b$.
Let $M \in \mathcal{V}(\mathcal{G})$ be a vertex with degree $d_{M}$ such that
\begin{equation}
d_{M}\geq d_{u} \quad \forall u \in \mathcal{N}_{a}\cap\mathcal{N}_{b}^{c}.
\end{equation}
Since the graph $\mathcal{G}$ has the minimum sensitivity to crosstalk, by Corollary \ref{IncreasingDegreesCorollary}, this means that $a$ is connected to all vertices that have degree larger than $M$.
If it did not, we could rewire the edge $(a,M)$ to connect $a$ with a vertex with a larger degree.
The same applies for vertex $b$, and since $d_{a}\leq d_{b}$, the only way that vertex $b$ would not be connected to $M$, is when it is connected to some other vertex $P$ with degree equal to the degree of $M$, 
\begin{equation}
d_{M}=d_{P}=\theta .
\end{equation}
Even in that case, if we rewire the edge that connects $b$ with $P$, so that after rewiring it connects $b$ with $M$,  the difference of overall crosstalk sensitivity for the graph $\mathcal{G}$ will be negative
\begin{align} 
\Delta K &= K_{new}-K_{old} \notag \\
& =h(\theta +1)+h(\theta-1)-2h(\theta) \\
& <0 \notag
\end{align}
because the total crosstalk affinity function $h$ is strictly concave.
This means that $\mathcal{G}$ was not a graph with minimum crosstalk sensitivity, which is a contradiction.

Finally note that the last derivation confirms that in the graph with minimum crosstalk sensitivity, 
\begin{equation}
d_{a}=d_{b} \iff  \mathcal{N}_{a-b}=\mathcal{N}_{b-a}.
\label{EqualDegreesSameNeighbors}
\end{equation}
\end{proof}

\subsection{Rewiring Algorithm}

In this section, we describe an algorithm that finds the structure of the network with the minimum sensitivity to additive crosstalk.
We will first explain how it works, and then prove that it always returns the optimal network, regardless of the input graph.
The algorithm consists of two steps: The first step (single rewirings) ensures that the condition of equation (\ref{DegreeSubsetCondition}) holds.
In every iteration, it checks if each vertex could be connected to another neighbor with larger degree, therefore decreasing the overall crosstalk sensitivity of the graph, according to  Corollary \ref{IncreasingDegreesCorollary}.
The second step ensures that the required changes that are not possible with single-vertex rewirings actually take place.

\begin{figure}[htbp]
\centering
\psscalebox{0.3}{
\begin{pspicture}(-6,-6)(20,6)
{
\cnodeput[fillstyle=solid,fillcolor=red](2.5,4.33){A}{\strut}
\cnodeput[fillstyle=solid,fillcolor=green](-2.5,4.33){B}{\strut}
\cnodeput[fillstyle=solid,fillcolor=red](-5,0){C}{\strut}
\cnodeput(-2.5,-4.33){D}{\strut}
\cnodeput[fillstyle=solid,fillcolor=green](2.5,-4.33){E}{\strut}
\cnodeput(5,0){F}{\strut}
\cnodeput[fillstyle=solid,fillcolor=red](16.5,4.33){L}{\strut}
\cnodeput[fillstyle=solid,fillcolor=green](11.5,4.33){H}{\strut}
\cnodeput[fillstyle=solid,fillcolor=red](9,0){I}{\strut}
\cnodeput(11.5,-4.33){J}{\strut}
\cnodeput[fillstyle=solid,fillcolor=green](16.5,-4.33){K}{\strut}
\cnodeput(19,0){G}{\strut}
}
\ncline{-}{A}{C}
\ncline{-}{A}{F}
\ncline{-}{C}{B}
\ncline{-}{C}{D}
\ncline{-}{C}{E}
\ncline{-}{C}{F}
\ncline{-}{F}{B}
\ncline{-}{F}{D}
\ncline{-}{F}{E}
\ncline{-}{D}{E}
\ncline{-}{B}{D}
\rput(7,0){\Huge $\Rightarrow$}
\ncline{-}{G}{H}
\ncline{-}{G}{I}
\ncline{-}{G}{J}
\ncline{-}{G}{K}
\ncline{-}{G}{L}
\ncline{-}{H}{I}
\ncline{-}{H}{J}
\ncline{-}{H}{K}
\ncline{-}{I}{J}
\ncline{-}{I}{K}
\ncline{-}{J}{K}
\end{pspicture}
}
\caption{An example of a double rewiring needed for the graph to minimize crosstalk sensitivity. If $h_{1}+3h_{4}<h_{2}+2h_{3}+h_{5}$ (for example when $h(d)=(N-1-d)\sqrt{d}$), there is no edge that can be rewired keeping one of its endpoints, and that would decrease the overall sensitivity.
A double rewiring is needed, removing the edge between the red vertices and connecting the green ones with it.}
\label{DoubleRewiringMotivation}
\end{figure}
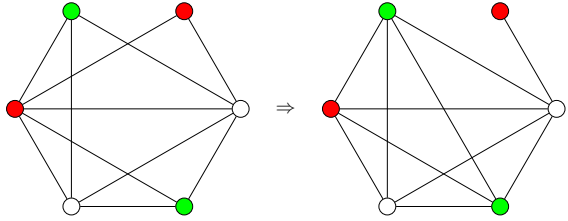

After the first step is finished (there are no more possible single-vertex rewirings),   the degrees of all vertices satisfy equation (\ref{DegreeSubsetCondition}).
The resulting graph is not necessarily one with the minimum possible sensitivity,  because we may have a situation like the one depicted in Figure \ref{DoubleRewiringMotivation}, where we need to rewire one edge such that we change both vertices it is adjacent to.
In this case, we cannot perform a single rewiring, and we need to delete an edge from a pair of vertices, and add it to another.
There are only two possible arrangements of the edges among four vertices that may need a double rewiring transformation (shown in Figure \ref{AllPossible4VertexCases}), and the condition that the degrees of these vertices need to satisfy is shown in Table \ref{DegreeOrderDoubleRewiringsTable}.
\begin{figure}[htbp]
\subfigure[]{
\centering
\psscalebox{0.4}{
\begin{pspicture}(-1,-1)(5.5,5)
{
\cnodeput(0,4){A}{\strut\boldmath$A$}
\cnodeput(4,4){B}{\strut\boldmath$B$}
\cnodeput(0,0){C}{\strut\boldmath$C$}
\cnodeput(4,0){D}{\strut\boldmath$D$}
}
\ncline{-}{A}{B}
\end{pspicture}
}
}
\subfigure[]{
\psscalebox{0.4}{
\begin{pspicture}(-1,-1)(5.5,5)
{
\cnodeput(0,4){A}{\strut\boldmath$A$}
\cnodeput(4,4){B}{\strut\boldmath$B$}
\cnodeput(0,0){C}{\strut\boldmath$C$}
\cnodeput(4,0){D}{\strut\boldmath$D$}
}
\ncline{-}{A}{B}
\ncline{-}{A}{C}
\end{pspicture}
}
}
\subfigure[]{
\psscalebox{0.4}{
\begin{pspicture}(-1,-1)(5.5,5)
{
\cnodeput(0,4){A}{\strut\boldmath$A$}
\cnodeput(4,4){B}{\strut\boldmath$B$}
\cnodeput(0,0){C}{\strut\boldmath$C$}
\cnodeput(4,0){D}{\strut\boldmath$D$}
}
\ncline{-}{A}{B}
\ncline{-}{A}{C}
\ncline{-}{A}{D}
\end{pspicture}
}
}
\subfigure[]{
\psscalebox{0.4}{
\begin{pspicture}(-1,-1)(5.5,5)
{
\cnodeput(0,4){A}{\strut\boldmath$A$}
\cnodeput(4,4){B}{\strut\boldmath$B$}
\cnodeput(0,0){C}{\strut\boldmath$C$}
\cnodeput(4,0){D}{\strut\boldmath$D$}
}
\ncline{-}{A}{B}
\ncline{-}{A}{C}
\ncline{-}{B}{C}
\end{pspicture}
}
}
\subfigure[]{
\psscalebox{0.4}{
\begin{pspicture}(-1,-1)(5.6,5)
{
\cnodeput(0,4){A}{\strut\boldmath$A$}
\cnodeput(4,4){B}{\strut\boldmath$B$}
\cnodeput(0,0){C}{\strut\boldmath$C$}
\cnodeput(4,0){D}{\strut\boldmath$D$}
}
\ncline{-}{A}{B}
\ncline{-}{A}{C}
\ncline{-}{B}{D}
\end{pspicture}
}
}
\subfigure[]{
\psscalebox{0.4}{
\begin{pspicture}(-2,-1)(6.5,5)
{
\cnodeput(0,4){A}{\strut\boldmath$A$}
\cnodeput(4,4){B}{\strut\boldmath$B$}
\cnodeput(0,0){C}{\strut\boldmath$C$}
\cnodeput(4,0){D}{\strut\boldmath$D$}
}
\ncline{-}{A}{B}
\ncline{-}{A}{C}
\ncline{-}{A}{D}
\ncline{-}{B}{C}
\end{pspicture}
}
}
\subfigure[]{
\psscalebox{0.4}{
\begin{pspicture}(-1.2,-1)(5.3,5)
{
\cnodeput(0,4){A}{\strut\boldmath$A$}
\cnodeput(4,4){B}{\strut\boldmath$B$}
\cnodeput(0,0){C}{\strut\boldmath$C$}
\cnodeput(4,0){D}{\strut\boldmath$D$}
}
\ncline{-}{A}{B}
\ncline{-}{A}{C}
\ncline{-}{A}{D}
\ncline{-}{B}{C}
\ncline{-}{B}{D}
\end{pspicture}
}
}
\caption{All nonisomorphic subgraphs of $4$ vertices where $A - B$ and $C\notslash D$. See Table \ref{DegreeOrderDoubleRewiringsTable} for all the allowed degree correlations among the degrees of the four vertices that satisfy equation (\ref{DegreeSubsetCondition}). Subgraphs $(c)$ and $(d)$ may need a rewiring of an edge between $A$ and $B$ to connect vertices $C$ and $D$. Note that if we apply the double-vertex rewiring, we transform subgraph $(c)$ to subgraph $(d)$ and vice versa.}
\label{AllPossible4VertexCases}
\end{figure}
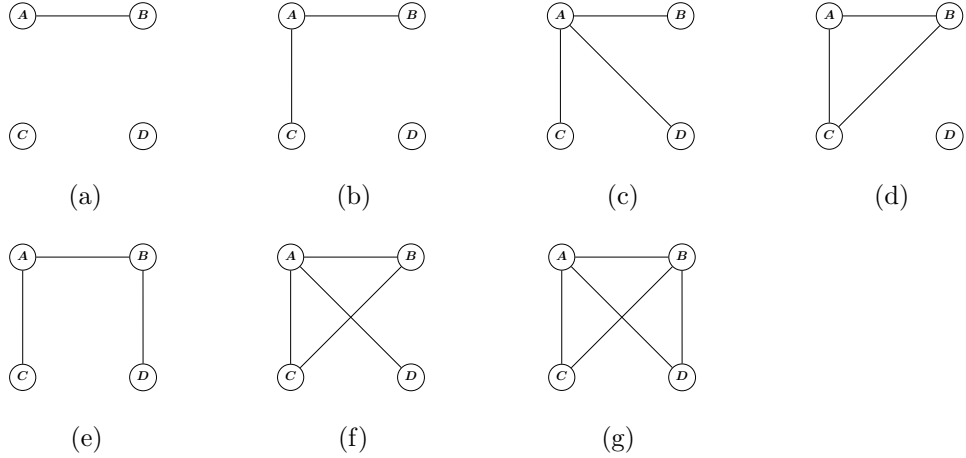
\begin{table}[htbp]
\begin{center}
\begin{tabular}{|c|c|c|}
\hline
Graph & Possible Degree Order & $\Delta K$ \\
\hline
\hline
a & $d_{D}\leq d_{C} \leq d_{B} \leq d_{A}$ & $>0$ \\
\hline
b & $d_{D}\leq d_{C} \leq d_{B} \leq d_{A}$ & $>0$ \\
   & $d_{D}\leq d_{B} \leq d_{C} \leq d_{A}$ & $>0$ \\
\hline
   & $d_{D}\leq d_{C} \leq d_{B} \leq d_{A}$ & $>0$ \\
c & $d_{D}\leq d_{B} \leq d_{C} \leq d_{A}$ & $>0$ \\
   & $d_{B}\leq d_{D} \leq d_{C} \leq d_{A}$ & \textbf{\red{?}} \\
\hline
   & $d_{D}\leq d_{C} \leq d_{B} \leq d_{A}$ & $>0$ \\
d & $d_{D}\leq d_{B} \leq d_{C} \leq d_{A}$ & $>0$ \\
   & $d_{D}\leq d_{B} \leq d_{A} \leq d_{C}$ & \textbf{\red{?}} \\
\hline
e & $d_{D}\leq d_{C} \leq d_{B} = d_{A}$ & $>0$ \\
\hline
f & $d_{D}\leq d_{C} \leq d_{B} \leq d_{A}$ & $>0$ \\
   & $d_{D}\leq d_{B} \leq d_{C} \leq d_{C}$ & $>0$ \\
\hline
g & $d_{D}\leq d_{C} \leq d_{B} \leq d_{A}$ & $>0$ \\
\hline
\end{tabular}
\end{center}
\caption{All the possible degree orderings of the degrees of the four vertices shown in  Figure \ref{AllPossible4VertexCases}, with the difference in the overall crosstalk sensitivity if we rewire the edge between $A$ and $B$ to connect vertices $C$ and $D$, and assuming that all possible single rewirings that decrease crosstalk have already taken place. In all but two cases, the overall sensitivity to crosstalk increases. Cases $(c)$ and $(d)$ may have a positive or negative difference (red question marks), depending on the crosstalk function $h$,and only if the degrees of the vertices $A,B,C,D$ are ordered as shown. 
Note that case $(e)$ is  not possible at any time, but is included here for completeness.
}
\label{DegreeOrderDoubleRewiringsTable}
\end{table}

\begin{algorithm}
\caption{Find the Graph with Minimum Crosstalk Sensitivity $\mathcal{G}$}
\label{MinAdditiveXtalkAlgorithm}
\begin{algorithmic}
\REQUIRE An arbitrary connected graph $\mathcal{G}$ of order $N$ and size $m$
\ENSURE The graph with the minimum sensitivity to crosstalk
\STATE
\STATE changeflag$\leftarrow 1$
\WHILE{changeflag==$1$}
\STATE changeflag$\leftarrow 0$
\FOR{$u \in \mathcal{V}(\mathcal{G})$}
\IF{$ \displaystyle \min _{v \in \mathcal{N}_{u} } \{ d_{v}\}  \leq  \max _{w \in \mathcal{N}^{c}_{u} } \{ d_{w}\}  $}
\STATE $\displaystyle x=\argmin _{v \in \mathcal{N}_{u} } \{ d_{v}\}$; $\displaystyle y=\argmin _{w \in \mathcal{N}_{u}^{c} } \{ d_{w}\}$
\STATE Remove the edge $(u,x)$;Add the edge $(u,y)$; changeflag$\leftarrow 1$
\IF{$\mathcal{G}$ becomes disconnected}
\STATE Revert the changes
\ENDIF
\ENDIF
\ENDFOR
\ENDWHILE
\STATE 
\STATE changeflag$\leftarrow 1$
\WHILE{changeflag==$1$}
\STATE changeflag$\leftarrow 0$
\FOR{$A,B,C,D \in \mathcal{V}(\mathcal{G})$}
\IF{$(A - B $  \AND  $A - C $ \AND  $A - D $ \AND  $B \notslash C $ \AND  $B \notslash D $ \AND  $C \notslash D) $ \OR $(A - B $  \AND  $A - C $ \AND  $B - C $ \AND  $A \notslash D $ \AND  $B \notslash D $ \AND  $C \notslash D $) }
\STATE $\Delta G_{A}=g(d_{A})-g(d_{A}-1)$;$\Delta G_{B}=g(d_{B})-g(d_{B}-1)$
\STATE $\Delta G_{C}=g(d_{C}+1)-g(d_{C})$; $\Delta G_{D}=g(d_{D}+1)-g(d_{D})$
\STATE $\Delta K_{A,B,C,D}=-\Delta G_{A}-\Delta G_{B}+\Delta G_{C}+\Delta G_{D}   $
\ENDIF
\STATE Find the vertex set $(A,B,C,D)$ with the smallest $\Delta K$
\IF{$\min\Delta K<0$ \AND rewiring the edge $A-B$ to $C-D$ keeps $\mathcal{G}$ connected}
\STATE Remove the edge $(A,B)$; Add the edge $(C,D)$; changeflag$\leftarrow 1$
\ENDIF
\ENDFOR
\ENDWHILE
\end{algorithmic}
\end{algorithm}

\begin{theorem}
Provided with an arbitrary graph $\mathcal{G}_{N,m}$, Algorithm \ref{MinAdditiveXtalkAlgorithm} returns the network $\mathcal{G}^{*}_{N,m}$ with the minimum sensitivity to additive crosstalk.
\end{theorem}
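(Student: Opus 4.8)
The plan is to separate the two assertions implicit in the statement: that Algorithm \ref{MinAdditiveXtalkAlgorithm} halts, and that the graph it returns attains the minimum value of $K$. For termination I would use $K$ itself as a strictly decreasing potential. Each single rewiring executed in the first loop removes a lower-degree neighbor of the pivot $u$ and attaches a higher-degree non-neighbor, so by Corollary \ref{IncreasingDegreesCorollary} (and by strict concavity of $h$ when the two degrees coincide) it strictly lowers $K$; each double rewiring executed in the second loop passes the explicit guard $\min \Delta K < 0$ and so strictly lowers $K$ as well. Since moves that would disconnect $\mathcal{G}$ are reverted, connectivity is an invariant, and as there are only finitely many connected graphs on $N$ labeled vertices with $m$ edges, $K$ cannot decrease forever. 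Hence both loops terminate.

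Next I would identify the local-optimality guarantee each phase leaves behind. When the first loop exits, no connectivity-preserving single rewiring lowers $K$; I would show that this forces the nested-neighborhood ordering (\ref{DegreeSubsetCondition}), so that on the output of the first phase every neighbor of a vertex outranks each of its non-neighbors in degree. When the second loop exits, its guard certifies that no double rewiring lowers $K$. The enumeration in Table \ref{DegreeOrderDoubleRewiringsTable} of the nonisomorphic four-vertex patterns of Figure \ref{AllPossible4VertexCases} then plays a pivotal role: it shows that, once (\ref{DegreeSubsetCondition}) holds, the only four-vertex reconfigurations that can still reduce $K$ are the double rewirings of patterns $(c)$ and $(d)$ — exactly the patterns the second loop inspects. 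Consequently the returned graph is simultaneously stable under single and under double rewirings.

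The core of the proof is the implication from this joint local stability to global optimality, and I would route it through equation (\ref{VertexOnlyEquation}): $K=\sum_u h(d_u)$ depends on $\mathcal{G}$ only through its degree sequence, and $h$ is strictly concave by Lemma \ref{ConcaveProductLemma}. Minimizing $K$ is therefore the purely combinatorial task of finding the most ``spread out'' connected graphic degree sequence admissible under (\ref{PlausibleDegreesInequality}), and Lemma \ref{ConcaveLemma} quantifies this: transferring one unit of degree from a lower-degree to a higher-degree vertex strictly lowers $K$ whenever all degrees stay inside the admissible band. I would argue the contrapositive. If the terminal degree sequence were not the global minimizer, such a $K$-decreasing transfer would exist; I would then realize it on the graph — when some vertex is adjacent to the donor but not to the recipient it is a single rewiring, and otherwise it is a double rewiring of type $(c)$ or $(d)$ — contradicting the joint local stability established above. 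This pins the output at the optimal degree sequence, hence at the minimum value of $K$.

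The step I expect to be the main obstacle is guaranteeing that each required spreading transfer admits a \emph{connectivity-preserving} realization, so that the revert mechanism never strands the algorithm at a suboptimal graph, and confirming that the two phases compose correctly — that is, that a type-$(c)$ or type-$(d)$ double rewiring never reopens a $K$-decreasing single rewiring that (\ref{DegreeSubsetCondition}) had ruled out. I would attack the first point by exhibiting an optimal configuration that contains a full-degree vertex (consistent with the star being the optimal tree), which makes connectivity automatic, and by showing that any improving transfer can be rerouted through such a vertex; I would attack the second by verifying that the double rewirings of the second phase preserve the ordering (\ref{DegreeSubsetCondition}), so that both local-optimality properties hold together at termination.
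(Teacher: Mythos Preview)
Your route to global optimality is genuinely different from the paper's. You lean on equation (\ref{VertexOnlyEquation}) to reduce the problem to a pure degree-sequence optimization and then argue the contrapositive: a non-optimal terminal sequence would admit a $K$-decreasing two-vertex degree transfer, which you then try to realize as a single or double rewiring. The paper instead proves a \emph{reordering} (supermodularity) property of the improving moves: using the strict concavity of $h$, it shows that when two improving double rewirings share a vertex, performing one makes the other at least as beneficial ($\Delta K_2' \leq \Delta K_2$). From this it concludes that the greedy descent cannot stall short of the optimum, since any sequence of improving moves leading to the minimum can be reordered so that each step is still improving from the current graph. Both arguments ultimately exploit the concavity of $h$, but the paper applies it to successive increments of the \emph{same} vertex across moves, whereas you apply it (via Lemma \ref{ConcaveLemma}) to a single donor/recipient pair. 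On your second anticipated obstacle you are aligned with the paper: it also proves that the double rewirings in the second phase preserve condition (\ref{DegreeSubsetCondition}).

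There is, however, a genuine gap at your core step. Your contrapositive produces a $K$-decreasing \emph{two-vertex} transfer (donor $a$, recipient $b$ with $d_a\leq d_b$), but a double rewiring deletes an edge $(A,B)$ and inserts an edge $(C,D)$, altering \emph{four} degrees; it does not realize the transfer you obtained. Worse, after the first phase condition (\ref{DegreeSubsetCondition}) holds, so every neighbor of the donor other than $b$ is already a neighbor of the recipient; hence any edge you remove at $a$ must also lower the degree of some $w\in\mathcal{N}_{a}$, and any edge you add at $b$ must also raise the degree of some non-neighbor $z$ of $b$. The resulting $\Delta K$ acquires an extra term $-\Delta h_{w}+\Delta h_{z}$ whose sign is not controlled by your two-vertex inequality, so the existence of an improving rewiring of type $(c)$ or $(d)$ does not follow. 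What you would need instead is the stronger statement that whenever the degree sequence is non-optimal and (\ref{DegreeSubsetCondition}) holds, some four-tuple admits an improving double rewiring --- and establishing that is essentially what the paper's reordering argument accomplishes by a different mechanism.
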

\begin{proof}
We need to prove two statements, the first is that the double rewirings do not affect the condition of equation (\ref{DegreeSubsetCondition}), and the second is that the algorithm cannot be trapped in a local minimum.
Suppose that we perform a double rewiring (removing the edge $(A,B)$ and adding the edge (C,D)) reducing the overall sensitivity to crosstalk by $\Delta K$, and condition (\ref{DegreeSubsetCondition}) is violated.
This would only be possible if there exists a vertex $M$ with degree $d_{M}$ such that 
\begin{equation}
d_{M}=d_{B}-1 \quad \textrm{and} \quad \mathcal{N}_{B-A} \neq \mathcal{N}_{M},
\end{equation}
so that after removing the edge $(A,B)$, vertices $B$ and $M$ have the same degrees, and condition (\ref{EqualDegreesSameNeighbors}) does not hold.
The last equation also implies that $(A,M) \in \mathcal{E}$, because otherwise $\mathcal{N}_{B}$ and $\mathcal{N}_{M}$ would be identical after removing the edge $(A,B)$.
But in that case, 
\begin{equation}
\Delta K'=-\Delta h_{A}-\Delta h_{M}+\Delta h_{C}+\Delta h_{D}<-\Delta h_{A}-\Delta h_{B}+\Delta h_{C}+\Delta h_{D}= \Delta K
\end{equation}
because $h$ is strictly concave, thus $\Delta h_{M}>\Delta h_{B}$.

We will now prove that the algorithm cannot be trapped in a local minimum by showing that all the individual steps needed in order to reach the graph with the minimum  overall crosstalk sensitivity can be reordered arbitrarily.
This condition is sufficient in order to prove that when there are no such rewirings left, we have reached the optimal graph (Figure \ref{ReorderRewirings}).

\begin{figure}[htbp]
\begin{center}
\includegraphics[scale=0.4]{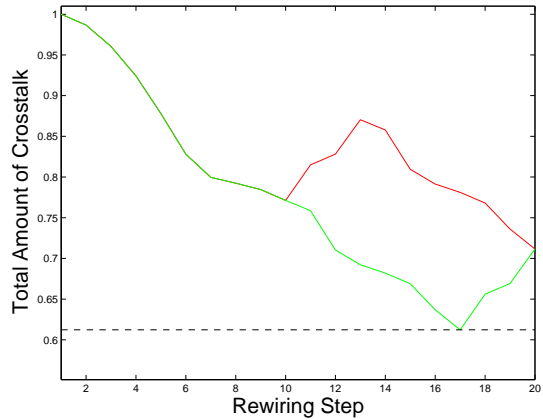} 
\caption{
Algorithm \ref{MinAdditiveXtalkAlgorithm} cannot be trapped in a local minimum. After decreasing the overall crosstalk sensitivity (blue line), suppose it stops at a local minimum. Since we can reorder the steps, and each step decreases the sensitivity at least as much as when taken individually (equation (\ref{SubsequentSmallerEquation})), we may perform all the decreasing transformations first and then stop, ending up with a network with smaller overall sensitivity to crosstalk.
}
\label{ReorderRewirings}
\end{center}
\end{figure}

The single-vertex rewirings can be performed in an arbitrary order, since the only way there are no more such rewirings possible is when the network satisfies equation (\ref{DegreeSubsetCondition}).
So it suffices to prove that the double rewirings can be rearranged.
Suppose that we have two or more such rewirings needed.
For each pair of vertices that will accept one extra edge, they can have up to one vertex in common, whose degree will increase by at least $2$.
If they have no vertex in common, the double rewirings are completely independent of each other, so the statement holds.
They cannot have the same vertices, because these two vertices will already be connected by an edge after one of the rewirings.
Now assume that we can perform two double rewirings.
The first is moving the edge $(A,B)$ to $(C,D)$, changing crosstalk by $\Delta K_{1}$, and the second is moving the edge $(E,F)$ to $(C,J)$, changing crosstalk by $\Delta K_{2}$, based on the current degrees of vertices $A,B,C,D,E,F$ and $J$. 
If $C$ is common in both rewirings, then if we perform the first rewiring and then perform the second rewiring (now changing the overall crosstalk sensitivity by $\Delta K_{2}'$), we will show that
\begin{equation}
\Delta K _{1}<0, \Delta K_{2}<0 \implies \Delta K_{2}' \leq \Delta K_{2}.
\end{equation}
This means that every rewiring makes the subsequent ones decrease the crosstalk even more than their individual contributions, and there is no way that a rewiring that increases crosstalk could enable future rewirings to decrease the overall crosstalk more than they would have otherwise.
Denote
\begin{equation}
\Delta K_{1}=-\Delta h_{A}-\Delta h_{B}+\Delta h_{C}+\Delta h_{D}<0
\end{equation}
where 
\begin{equation}
\Delta h_{A}=h(d_{A})-h(d_{A}-1)>0,\qquad \Delta h_{C}=h(d_{C}+1)-h(d_{C})>0
\end{equation}
and similarly for the differences of the crosstalk affinities of the other vertices.
By the same token,
\begin{equation}
\Delta K_{2}=-\Delta h_{E}-\Delta h_{F}+\Delta h_{C}+\Delta h_{J}<0.
\end{equation}
If we perform the first double rewiring ($(A,B)$ is rewired to $(C,D)$), the new degree of vertex $C$ will increase by $1$, and as a result
\begin{equation}
\Delta K_{2}'=-\Delta h_{E}-\Delta h_{F}+\Delta h'_{C}+\Delta h_{J}
\end{equation}
where
\begin{equation}
\Delta h'_{C}=h(d_{C}+2)-h(d_{C}+1)<\Delta h_{C}
\end{equation}
since $h$ is strictly concave.
It follows that in every case we have
\begin{equation}
\Delta K_{2}' \leq \Delta K_{2}.
\label{SubsequentSmallerEquation}
\end{equation}
The same is true when we have more than one double rewiring that decreases the overall crosstalk sensitivity and removes two or more edges from a vertex, because in that case, we can similarly show that the difference of the crosstalk affinity of this vertex increases, and since it is negated, the decrease in crosstalk of the subsequent rewirings will be greater.
\end{proof}

\begin{theorem}
We can find the graph with the lowest sensitivity to crosstalk in $\mathcal{O}(N^{4})$ time.
\end{theorem}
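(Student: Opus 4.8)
The plan is to charge the cost of Algorithm \ref{MinAdditiveXtalkAlgorithm} line by line, treating its two \texttt{while} loops separately, and to bound the number of iterations of each loop by a monovariant. I would store the graph as an $N\times N$ adjacency matrix together with a degree vector $\mathbf{d}$, so that an adjacency query and a lookup of any $d_u$ are $O(1)$; constructing these from the input is $O(N^2)$. Throughout I use that $m\le\binom{N}{2}$, so that $O(N+m)=O(N^2)$.

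For the single-rewiring loop, one pass runs the \texttt{for} loop over the $N$ vertices, and for each $u$ the values $\min_{v\in\mathcal N_u}d_v$ and $\max_{w\in\mathcal N_u^c}d_w$ are obtained by one sweep over the remaining vertices, i.e. $O(N)$ per vertex and $O(N^2)$ per pass, with the connectivity test after a tentative move being one depth-first search in $O(N^2)$. To count the moves I would introduce the monovariant $\Phi(\mathcal G)=\sum_u d_u^2$. A single rewiring removes an edge at a vertex of degree $d_v$ and attaches one at a vertex of degree $d_w\ge d_v$ — precisely the situation of Corollary \ref{IncreasingDegreesCorollary} — so $\Phi$ increases by $(d_v-1)^2+(d_w+1)^2-d_v^2-d_w^2=2(d_w-d_v)+2\ge2$. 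Since $\Phi$ is integer-valued and $0\le\Phi\le N(N-1)^2=O(N^3)$, there are at most $O(N^3)$ single rewirings; maintaining the per-vertex extremal values incrementally (only the $O(N)$ vertices adjacent to the two endpoints change) then keeps the amortized cost of this phase within $O(N^4)$.

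The double-rewiring loop is where the $O(N^4)$ estimate is set. One complete scan enumerates the $O(N^4)$ ordered quadruples $(A,B,C,D)$, tests the two admissible patterns of Figure \ref{AllPossible4VertexCases} in $O(1)$ each from the adjacency matrix, and evaluates $\Delta K_{A,B,C,D}=-\Delta h_A-\Delta h_B+\Delta h_C+\Delta h_D$ in $O(1)$; hence one scan is $O(N^4)$. Here I would use that after any rewiring only the $O(N)$ quadruples meeting one of the four affected vertices can change their $\Delta K$, so the minimizing quadruple can be refreshed in $O(N^3)$ per subsequent move rather than by a fresh $O(N^4)$ sweep. Combined with the monovariant and with the fact that $K=\sum_u h(d_u)$ strictly decreases at every move (so no configuration recurs), this confines the whole second phase to the single initial $O(N^4)$ scan plus lower-order updates. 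Summing the $O(N^2)$ initialization, the single-rewiring phase, and this dominant scan yields $O(N^4)$.

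The main obstacle is exactly this reconciliation of per-scan cost with iteration count. Read literally, the pseudocode performs a fresh $O(N^4)$ sweep in each of possibly many double-rewiring passes, which on its own gives only $O(N^4)\cdot(\#\text{double rewirings})$, and likewise the single-rewiring phase naively costs $(\#\text{passes})\cdot O(N^2)$, which could be as large as $O(N^5)$ if each pass makes only one move. The delicate point, and the part I expect to require the most care, is to show that these iteration counts do not multiply the per-step cost beyond $O(N^4)$: one must argue either that the number of double rewirings ever needed is small once condition (\ref{DegreeSubsetCondition}) has been established by the first loop, or that the $\Delta K$ candidates can be maintained incrementally so that only the first scan is genuinely $O(N^4)$. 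The strict concavity of $h$ (Lemma \ref{ConcaveProductLemma}) and the monovariant $\Phi$ are the tools I would use both to bound how many moves can occur and to guarantee that no quadruple is profitably rewired twice.
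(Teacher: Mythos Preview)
Your approach is far more careful than the paper's, which is essentially a two-line count: the paper simply says the first loop ``checks all vertex pairs'' for $\mathcal{O}(N^{2})$ and the second loop ``checks groups of $4$ vertices'' for $\mathcal{O}(N^{4})$, and adds. The paper never introduces a monovariant, never discusses incremental maintenance, and never addresses the iteration-count issue you flag in your last paragraph; it implicitly charges only a single scan of each \texttt{while} loop. So the difference is not one of method but of ambition: you are trying to actually justify the bound, while the paper treats it as a back-of-the-envelope estimate.

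That said, your own accounting has a gap you should close. You correctly note that the connectivity test after a tentative single rewiring is one DFS in $\mathcal{O}(N^{2})$, and your monovariant $\Phi=\sum_u d_u^2$ gives only an $\mathcal{O}(N^{3})$ bound on the number of single rewirings. Multiplying these yields $\mathcal{O}(N^{5})$, not $\mathcal{O}(N^{4})$; the $\mathcal{O}(N)$ incremental update you invoke covers the extremal-degree bookkeeping but not the connectivity check. To stay within $\mathcal{O}(N^{4})$ you need either a sharper bound on the number of successful rewirings (e.g.\ $\mathcal{O}(N^{2})$, which your $\Phi$ does not give) or a cheaper connectivity oracle. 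The same issue resurfaces in the second phase: even with the clever $\mathcal{O}(N^{3})$ refresh of candidate quadruples, each accepted double rewiring also triggers an $\mathcal{O}(N^{2})$ connectivity test, and you have not bounded the number of accepted double rewirings at all---strict decrease of $K$ rules out cycling but does not by itself give a polynomial count, since $K$ is real-valued. Your instinct that ``the number of double rewirings ever needed is small once condition~(\ref{DegreeSubsetCondition}) holds'' is the right place to push; the paper does not supply this argument either.
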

\begin{proof}
The first part of the algorithm checks if each edge should be rewired, which takes 
$\mathcal{O}(N^{2})$ steps, since we need to check all the vertex pairs, possibly more than once.
The second part of the algorithm needs to check groups of $4$ vertices at a time, which will take $\mathcal{O}(N^{4})$ steps.
So the overall complexity of Algorithm \ref{MinAdditiveXtalkAlgorithm} is $\mathcal{O}(N^{4})$. 
\end{proof}

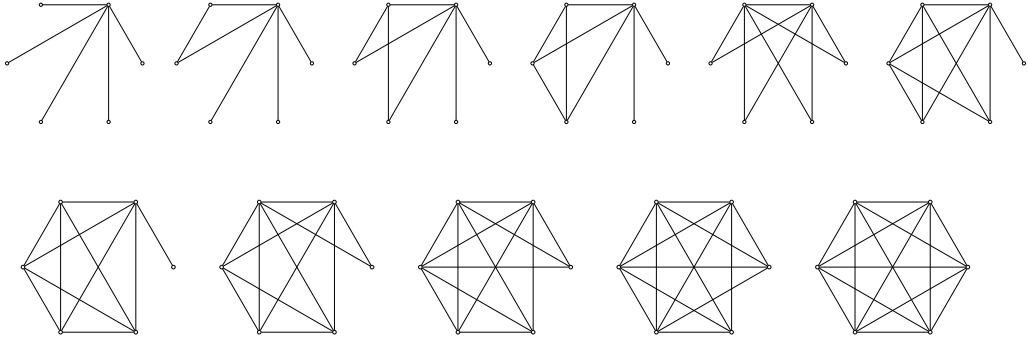
\begin{figure}[htbp]
\subfigure{
\psscalebox{0.18}{
\begin{pspicture}(-5,-5)(5,6)
{
\cnodeput(2.5,4.33){A}{}
\cnodeput(-2.5,4.33){B}{}
\cnodeput(-5,0){C}{}
\cnodeput(-2.5,-4.33){D}{}
\cnodeput(2.5,-4.33){E}{}
\cnodeput(5,0){F}{}
}
\ncline{-}{A}{B}
\ncline{-}{A}{C}
\ncline{-}{A}{D}
\ncline{-}{A}{E}
\ncline{-}{A}{F}
\end{pspicture}
}
}
\subfigure{
\psscalebox{0.18}{
\begin{pspicture}(-5,-5)(5,8)
{
\cnodeput(2.5,4.33){A}{}
\cnodeput(-2.5,4.33){B}{}
\cnodeput(-5,0){C}{}
\cnodeput(-2.5,-4.33){D}{}
\cnodeput(2.5,-4.33){E}{}
\cnodeput(5,0){F}{}
}
\ncline{-}{A}{B}
\ncline{-}{A}{C}
\ncline{-}{A}{D}
\ncline{-}{A}{E}
\ncline{-}{A}{F}
\ncline{-}{B}{C}
\end{pspicture}
}
}
\subfigure{
\psscalebox{0.18}{
\begin{pspicture}(-5,-5)(5,8)
{
\cnodeput(2.5,4.33){A}{}
\cnodeput(-2.5,4.33){B}{}
\cnodeput(-5,0){C}{}
\cnodeput(-2.5,-4.33){D}{}
\cnodeput(2.5,-4.33){E}{}
\cnodeput(5,0){F}{}
}
\ncline{-}{A}{B}
\ncline{-}{A}{C}
\ncline{-}{A}{D}
\ncline{-}{A}{E}
\ncline{-}{A}{F}
\ncline{-}{B}{C}
\ncline{-}{B}{D}
\end{pspicture}
}
}
\subfigure{
\psscalebox{0.18}{
\begin{pspicture}(-5,-5)(5,8)
{
\cnodeput(2.5,4.33){A}{}
\cnodeput(-2.5,4.33){B}{}
\cnodeput(-5,0){C}{}
\cnodeput(-2.5,-4.33){D}{}
\cnodeput(2.5,-4.33){E}{}
\cnodeput(5,0){F}{}
}
\ncline{-}{A}{B}
\ncline{-}{A}{C}
\ncline{-}{A}{D}
\ncline{-}{A}{E}
\ncline{-}{A}{F}
\ncline{-}{B}{C}
\ncline{-}{B}{D}
\ncline{-}{C}{D}
\end{pspicture}
}
}
\subfigure{
\psscalebox{0.18}{
\begin{pspicture}(-5,-5)(5,8)
{
\cnodeput(2.5,4.33){A}{}
\cnodeput(-2.5,4.33){B}{}
\cnodeput(-5,0){C}{}
\cnodeput(-2.5,-4.33){D}{}
\cnodeput(2.5,-4.33){E}{}
\cnodeput(5,0){F}{}
}
\ncline{-}{A}{B}
\ncline{-}{A}{C}
\ncline{-}{A}{D}
\ncline{-}{A}{E}
\ncline{-}{A}{F}
\ncline{-}{B}{C}
\ncline{-}{B}{D}
\ncline{-}{B}{E}
\ncline{-}{B}{F}
\end{pspicture}
}
}
\subfigure{
\psscalebox{0.18}{
\begin{pspicture}(-5,-5)(5,8)
{
\cnodeput(2.5,4.33){A}{}
\cnodeput(-2.5,4.33){B}{}
\cnodeput(-5,0){C}{}
\cnodeput(-2.5,-4.33){D}{}
\cnodeput(2.5,-4.33){E}{}
\cnodeput(5,0){F}{}
}
\ncline{-}{A}{B}
\ncline{-}{A}{C}
\ncline{-}{A}{D}
\ncline{-}{A}{E}
\ncline{-}{A}{F}
\ncline{-}{B}{C}
\ncline{-}{B}{D}
\ncline{-}{B}{E}
\ncline{-}{C}{D}
\ncline{-}{C}{E}
\end{pspicture}
}
}
\subfigure{
\psscalebox{0.2}{
\begin{pspicture}(-5.5,-5)(5.5,8)
{
\cnodeput(2.5,4.33){A}{}
\cnodeput(-2.5,4.33){B}{}
\cnodeput(-5,0){C}{}
\cnodeput(-2.5,-4.33){D}{}
\cnodeput(2.5,-4.33){E}{}
\cnodeput(5,0){F}{}
}
\ncline{-}{A}{B}
\ncline{-}{A}{C}
\ncline{-}{A}{D}
\ncline{-}{A}{E}
\ncline{-}{A}{F}
\ncline{-}{B}{C}
\ncline{-}{B}{D}
\ncline{-}{C}{D}
\ncline{-}{B}{E}
\ncline{-}{C}{E}
\ncline{-}{D}{E}
\end{pspicture}
}
}
\subfigure{
\psscalebox{0.2}{
\begin{pspicture}(-5.5,-5)(5.5,8)
{
\cnodeput(2.5,4.33){A}{}
\cnodeput(-2.5,4.33){B}{}
\cnodeput(-5,0){C}{}
\cnodeput(-2.5,-4.33){D}{}
\cnodeput(2.5,-4.33){E}{}
\cnodeput(5,0){F}{}
}
\ncline{-}{A}{B}
\ncline{-}{A}{C}
\ncline{-}{A}{D}
\ncline{-}{A}{E}
\ncline{-}{A}{F}
\ncline{-}{B}{C}
\ncline{-}{B}{D}
\ncline{-}{B}{E}
\ncline{-}{B}{F}
\ncline{-}{C}{D}
\ncline{-}{C}{E}
\ncline{-}{D}{E}
\end{pspicture}
}
}
\subfigure{
\psscalebox{0.2}{
\begin{pspicture}(-5.5,-5)(5.5,8)
{
\cnodeput(2.5,4.33){A}{}
\cnodeput(-2.5,4.33){B}{}
\cnodeput(-5,0){C}{}
\cnodeput(-2.5,-4.33){D}{}
\cnodeput(2.5,-4.33){E}{}
\cnodeput(5,0){F}{}
}
\ncline{-}{A}{B}
\ncline{-}{A}{C}
\ncline{-}{A}{D}
\ncline{-}{A}{E}
\ncline{-}{A}{F}
\ncline{-}{B}{C}
\ncline{-}{B}{D}
\ncline{-}{C}{D}
\ncline{-}{B}{E}
\ncline{-}{C}{E}
\ncline{-}{D}{E}
\ncline{-}{B}{F}
\ncline{-}{C}{F}
\end{pspicture}
}
}
\subfigure{
\psscalebox{0.2}{
\begin{pspicture}(-5.5,-5)(5.5,8)
{
\cnodeput(2.5,4.33){A}{}
\cnodeput(-2.5,4.33){B}{}
\cnodeput(-5,0){C}{}
\cnodeput(-2.5,-4.33){D}{}
\cnodeput(2.5,-4.33){E}{}
\cnodeput(5,0){F}{}
}
\ncline{-}{A}{B}
\ncline{-}{A}{C}
\ncline{-}{A}{D}
\ncline{-}{A}{E}
\ncline{-}{A}{F}
\ncline{-}{B}{C}
\ncline{-}{B}{D}
\ncline{-}{C}{D}
\ncline{-}{B}{E}
\ncline{-}{C}{E}
\ncline{-}{D}{E}
\ncline{-}{B}{F}
\ncline{-}{C}{F}
\ncline{-}{D}{F}
\end{pspicture}
}
}
\subfigure{
\psscalebox{0.2}{
\begin{pspicture}(-5.5,-5)(5.5,8)
{
\cnodeput(2.5,4.33){A}{}
\cnodeput(-2.5,4.33){B}{}
\cnodeput(-5,0){C}{}
\cnodeput(-2.5,-4.33){D}{}
\cnodeput(2.5,-4.33){E}{}
\cnodeput(5,0){F}{}
}
\ncline{-}{A}{B}
\ncline{-}{A}{C}
\ncline{-}{A}{D}
\ncline{-}{A}{E}
\ncline{-}{A}{F}
\ncline{-}{B}{C}
\ncline{-}{B}{D}
\ncline{-}{C}{D}
\ncline{-}{B}{E}
\ncline{-}{C}{E}
\ncline{-}{D}{E}
\ncline{-}{B}{F}
\ncline{-}{C}{F}
\ncline{-}{D}{F}
\ncline{-}{E}{F}
\end{pspicture}
}
}
\caption{Connected graphs of order $N=6$ and size $5\leq m \leq 15$ which have the smallest amount of overall crosstalk sensitivity among their vertices, when the intensity of the crosstalk interaction between two vertices is equal to the sum of their individual affinities, and the affinity of each vertex is proportional to the square root of its degree $f(d)=c\sqrt{d}$. Note that we cannot find the optimal graph of size $m$ recursively, by adding one extra edge to the optimal graph of size $m-1$.}
\label{All6Additive}
\end{figure}

\begin{figure}[htbp]
\centering
\subfigure{
\includegraphics[scale=0.35]{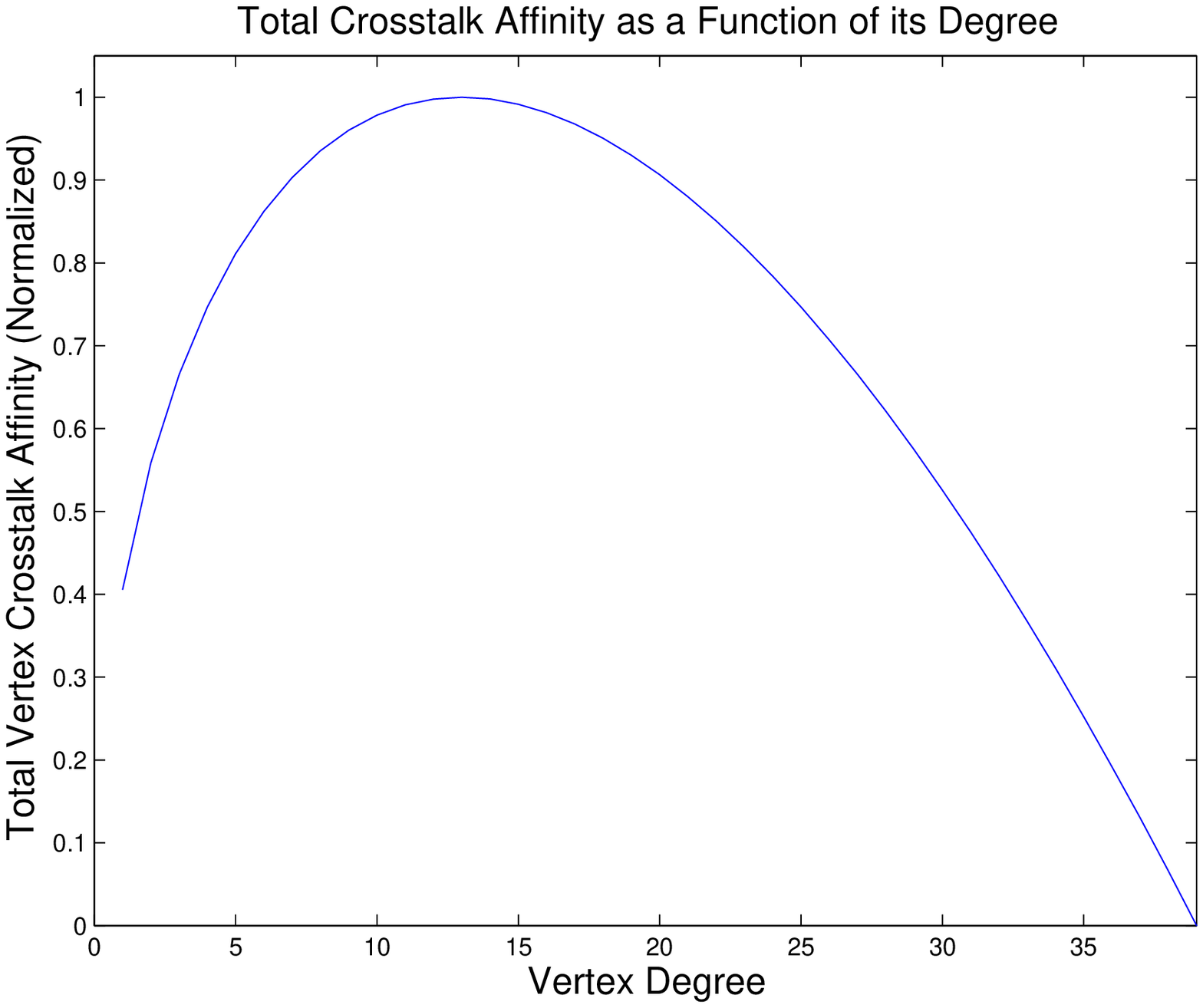}
}
\subfigure{
\includegraphics[scale=0.35]{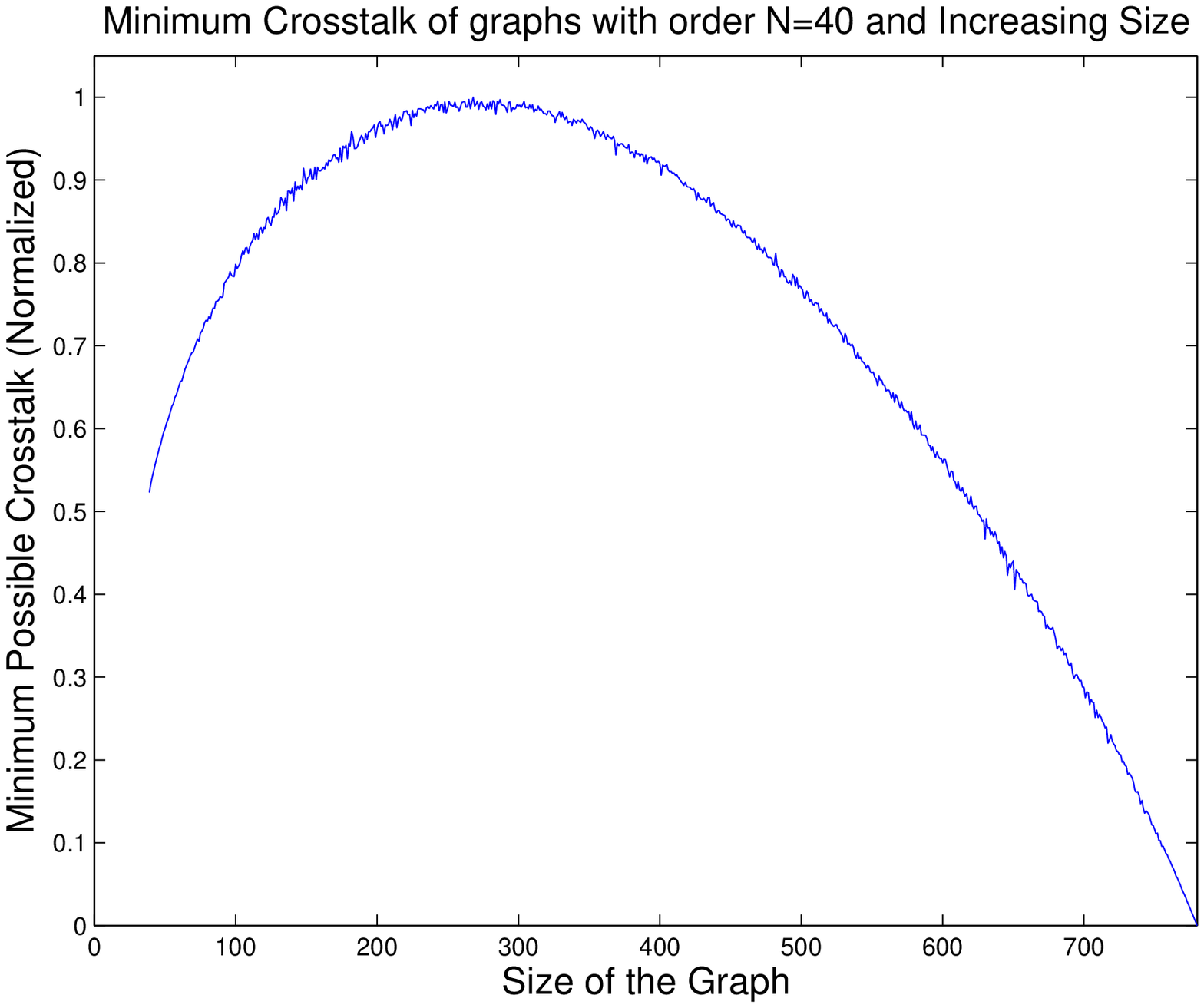}
}
\caption{\textbf{(a)} An example of the overall vertex affinity function $h$. Here, $f(d)=c\sqrt{d}$, and $h(d)=(N-1-d)f(d)$. \textbf{(b)} The minimum possible overall crosstalk  sensitivity of all graphs with $N=40$ vertices and $39\leq m \leq 780$ edges has the shape of the vertex crosstalk function. It is not smooth because $\mathbf{d}^{*}$ needs to satisfy additional constraints, most notably $\mathbf{d}^{*} \in \mathcal{P}$.}
\label{AdditiveXtalkEvolutionExample}
\end{figure}

Figure \ref{All6Additive} shows an example all the connected graphs with $N=6$ vertices which have the minimum additive crosstalk sensitivity.
Figure \ref{AdditiveXtalkEvolutionExample} shows the form of the total crosstalk affinity of each vertex, and the overall crosstalk of a graph of order $N=40$ as its size increases.

\FloatBarrier

\section{Multiplicative Affinities}

If the crosstalk between two vertices is proportional to the product of each node's affinity, then we can write the pairwise affinity function $g(x,y)$ as
\begin{equation}
g(x,y)=f(x)f(y).
\end{equation}
We will refer to this type of crosstalk interactions in the network as \textit{multiplicative crosstalk}.
In this case, the overall crosstalk sensitivity will be
\begin{align}
K &= \sum _{(u,v) \notin \mathcal{E}(\mathcal{G}) } g(d_{u},d_{v}) \notag \\
   &=\sum _{(u,v) \notin \mathcal{E}(\mathcal{G}) } f(d_{u})f(d_{v}).
\label{DefinitionMultiplicativeEquation}
\end{align}

Once again, we consider the vertex affinity function $f$ to be a positive, strictly increasing and concave function of the degree of each vertex.
We will first find the networks with the minimum crosstalk sensitivity under the assumption above and without any additional constraints.
We will then describe the connectivity of the networks with a fixed degree sequence which have the lowest sensitivity to multiplicative crosstalk.

\subsection{Structure of Networks with Minimum Crosstalk Sensitivity }

We first define a common graph pattern, and prove some lemmas that will be needed in order to find the form of the graphs with the lowest possible sensitivity to crosstalk.

\begin{definition}
A graph of order $N$ and size ${N-1\choose 2}+1 \leq m < {N\choose 2}$ is called a \textit{Type I almost complete graph} when it consists of a clique of order $N-1$, and one additional vertex of degree $\alpha=m-{N-1\choose 2}$ connected to it \cite{OptimalClustering}. The vertices of the clique are called \textit{central vertices}, and the additional node is called a \textit{peripheral vertex}.
\end{definition}
An example of such a graph is shown in Figure \ref{AlmostCompleteGraph}.
Since the only type of almost complete graph that we will study is Type I, we will omit mentioning the type of the almost complete graph without danger of confusion.

\begin{center}
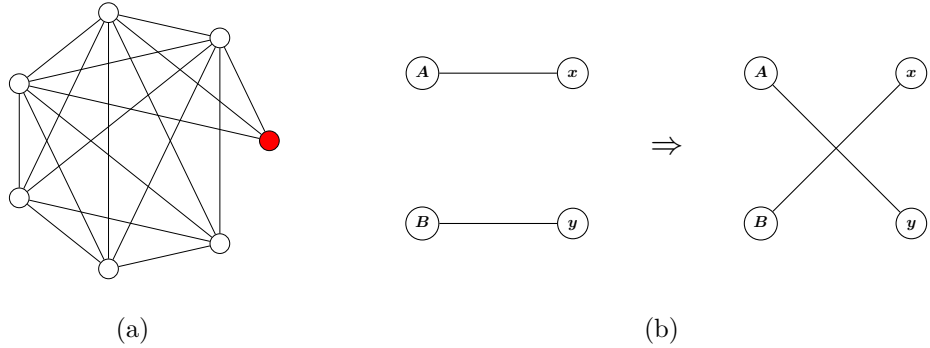
\begin{figure}[htbp]
\subfigure[]{
\psscalebox{0.35}{
\begin{pspicture}(-8,-6)(8,6)
{
\cnodeput(3.12,3.91){A}{\strut}
\cnodeput(-1.11,4.87){B}{\strut}
\cnodeput(-4.5,2.17){C}{\strut}
\cnodeput(-4.5,-2.17){D}{\strut}
\cnodeput(-1.11,-4.87){E}{\strut}
\cnodeput(3.12,-3.91){F}{\strut}
\cnodeput[fillstyle=solid,fillcolor=red](5,0){G}{\strut}
}
\ncline{-}{A}{B}
\ncline{-}{A}{C}
\ncline{-}{A}{D}
\ncline{-}{A}{E}
\ncline{-}{A}{F}
\ncline{-}{B}{C}
\ncline{-}{B}{D}
\ncline{-}{B}{E}
\ncline{-}{B}{F}
\ncline{-}{C}{D}
\ncline{-}{C}{E}
\ncline{-}{C}{F}
\ncline{-}{D}{E}
\ncline{-}{D}{F}
\ncline{-}{E}{F}
\ncline{-}{A}{G}
\ncline{-}{B}{G}
\ncline{-}{C}{G}
\end{pspicture}
}
\label{AlmostCompleteGraph}
}
\subfigure[]{
\psscalebox{0.5}{
\begin{pspicture}(-1,-2)(14,6)
{
\cnodeput(0,0){A}{\strut\boldmath$B$}
\cnodeput(0,4){B}{\strut\boldmath$A$}
\cnodeput(4,0){C}{\strut\boldmath$y$}
\cnodeput(4,4){D}{\strut\boldmath$x$}
\cnodeput(9,0){E}{\strut\boldmath$B$}
\cnodeput(9,4){F}{\strut\boldmath$A$}
\cnodeput(13,0){G}{\strut\boldmath$y$}
\cnodeput(13,4){H}{\strut\boldmath$x$}
}
\rput(6.5,2){\Huge $\Rightarrow$}
\ncline{-}{A}{C}
\ncline{-}{B}{D}
\ncline{-}{F}{G}
\ncline{-}{E}{H}
\end{pspicture}
}
\label{FirstTransformation}
}
\caption{\textbf{(a)} The \textit{Type I almost complete graph} consists of a clique of order $N-1$, and one peripheral vertex (shown in red). \textbf{(b)} If $d_{A}<d_{B}$ and $d_{x}>d_{y}$ then rewiring the edges $A-x$ and $B-y$ as shown will decrease the overall crosstalk sensitivity of the graph without changing its degree sequence.}
\label{AlmostComplete+Transformation}
\end{figure}
\end{center}

\begin{lemma}
In a graph with the minimum sensitivity to multiplicative crosstalk, if $A,B,x,y \in \mathcal{V}(\mathcal{G})$, then
\begin{equation}
d_{A}\leq d_{B} \quad \implies \quad d_{x}\leq d_{y} \qquad \forall x \in \mathcal{N}_{A}\cap \mathcal{N}^{^{c}}_{B} \quad \textrm{ and } \quad  y \in \mathcal{N}_{B}.
\label{MultipicativeXtalkNeighborDegrees}
\end{equation}
\label{MultiplicativeDegreeCorrelations}
\end{lemma}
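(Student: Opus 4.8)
The plan is to argue by contradiction using the degree-preserving edge swap depicted in Figure~\ref{FirstTransformation}. Suppose $\mathcal{G}$ has minimum multiplicative crosstalk sensitivity and that $d_A \leq d_B$, yet the conclusion fails, so that there exist $x \in \mathcal{N}_A \cap \mathcal{N}_B^{c}$ and $y \in \mathcal{N}_B$ with $d_x > d_y$. I would then construct a competitor graph $\mathcal{G}'$ of the same order and size with $K(\mathcal{G}') \leq K(\mathcal{G})$, strictly smaller when $d_A < d_B$, contradicting minimality.

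The competitor is obtained by deleting the edges $(A,x)$ and $(B,y)$ and inserting the edges $(A,y)$ and $(B,x)$. First I would confirm this is a legal move in a simple graph: $x \notin \mathcal{N}_B$ holds by hypothesis, so $(B,x)$ is genuinely new, and in the principal case $y \in \mathcal{N}_B \setminus \mathcal{N}_A$ the edge $(A,y)$ is new as well. The key structural fact is that the swap leaves every vertex degree unchanged: $A$ trades neighbor $x$ for $y$, $B$ trades $y$ for $x$, $x$ trades $A$ for $B$, and $y$ trades $B$ for $A$. Hence every factor $f(d_u)$ appearing in equation~(\ref{DefinitionMultiplicativeEquation}) is identical for $\mathcal{G}$ and $\mathcal{G}'$, and only the membership of the four affected pairs in the set of non-edges changes.

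Tracking those four pairs, the pairs $(A,x)$ and $(B,y)$ turn into non-edges (adding their products to $K$) while $(A,y)$ and $(B,x)$ stop being non-edges (removing theirs), so that
\begin{align}
\Delta K = K(\mathcal{G}') - K(\mathcal{G})
&= f(d_A)f(d_x) + f(d_B)f(d_y) - f(d_A)f(d_y) - f(d_B)f(d_x) \notag \\
&= \bigl(f(d_A) - f(d_B)\bigr)\bigl(f(d_x) - f(d_y)\bigr).
\end{align}
Because $f$ is strictly increasing, $d_A \leq d_B$ forces $f(d_A) - f(d_B) \leq 0$, while the assumption $d_x > d_y$ forces $f(d_x) - f(d_y) > 0$; thus $\Delta K \leq 0$, and $\Delta K < 0$ as soon as $d_A < d_B$. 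This contradicts the minimality of $\mathcal{G}$, so in fact $d_x \leq d_y$.

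The hard part is discharging the two boundary situations hidden in the argument. When $y$ is a common neighbor of $A$ and $B$ (i.e. $y \in \mathcal{N}_A$), adding $(A,y)$ would create a multiple edge and the swap above is unavailable; I expect to handle this either by an auxiliary comparison reducing it to the principal case or by showing such a configuration cannot persist at a minimizer. A second subtlety is connectivity: a lone edge swap can in principle disconnect $\mathcal{G}$, so before invoking minimality over connected graphs I must check that $\mathcal{G}'$ stays connected, or choose the exchange so that it does. Finally, the degenerate case $d_A = d_B$ makes $\Delta K = 0$, so the swap is crosstalk-neutral and yields no contradiction directly; recovering the (non-strict) conclusion here requires an extra observation, and this equality case is where I expect the main difficulty to lie.
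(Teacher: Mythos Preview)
Your approach is exactly the paper's: perform the degree-preserving swap $(A,x),(B,y)\to(A,y),(B,x)$, observe that all degrees are unchanged so only four non-edge pairs are affected, compute
\[
\Delta K=(f(d_B)-f(d_A))(f(d_y)-f(d_x)),
\]
and conclude from minimality that $\Delta K\geq 0$, forcing $d_x\leq d_y$.

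The paper's proof is in fact shorter than yours and does \emph{not} address any of the three boundary issues you flag. It does not discuss the possibility $y\in\mathcal{N}_A$ (where the swap would create a multi-edge), it does not verify that the swapped graph remains connected, and it does not single out the case $d_A=d_B$ (where $\Delta K=0$ regardless of $d_x,d_y$). It simply asserts the swap, writes down $\Delta K$, and reads off the conclusion. So your core argument already matches the paper completely; the concerns you raise about edge cases are more scrupulous than the published proof, not deficiencies relative to it.
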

\begin{proof}
If $\mathcal{N_{A}} \subseteq \mathcal{N}_{B}$ then the result holds trivially, since $\mathcal{N}_{A}\cap \mathcal{N}^{^{c}}_{B}=\emptyset$.
Otherwise, there is at least one vertex $x$ that is connected to $A$ and not connected to $B$. 
If we rewire the edges $(A,x)$ and $(B,y)$ to $(A,y)$ and $(B,x)$ as shown in Figure \ref{FirstTransformation}, the degree distribution stays the same for every vertex in the graph.
The difference in the overall sensitivity is 
\begin{align}
\Delta K =&f(d_{A})f(d_{x})+f(d_{B})f(d_{y})-f(d_{A})f(d_{y})-f(d_{B})f(d_{x}) \notag \\
=&(f(d_{B})-f(d_{A}))(f(d_{y})-f(d_{x})).
\end{align}
This difference should be nonnegative if the graph was optimal before the rewiring.
Since $d_{A}\leq d_{B}$, the sensitivity difference above can only be nonnegative if $d_{x}\leq d_{y}$.
\end{proof}

\begin{lemma}
Assume that the graph $\mathcal{G}_{N,m}$ with vertex set $\mathcal{V}$ and edge set $\mathcal{E}$ has the minimum sensitivity to multiplicative crosstalk and that the vertex affinity $f:{\Bbb R}^{+} \to {\Bbb R}^{+}$ is a strictly increasing concave function. Then
\begin{equation}
(a,b) \in \mathcal{E} \implies (a,x) \in \mathcal{E} \quad \forall a \in \mathcal{V} \quad \textrm{and} \quad \forall x \in \mathcal{V}, \textrm{such that} \quad d_{x}\geq d_{b}.
\label{MultiplicativeConnectToLargestDegreesEq}
\end{equation}
\label{MultiplicativeConnectToLargestDegrees}
\end{lemma}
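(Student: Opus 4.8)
The plan is to argue by contradiction using a single degree-changing rewiring, in the spirit of Corollary \ref{IncreasingDegreesCorollary} but adapted to the multiplicative cost, while invoking the degree-preserving Lemma \ref{MultiplicativeDegreeCorrelations} to control the cross terms. Suppose the graph is optimal yet $(a,b)\in\mathcal{E}$, $d_x\geq d_b$ and $(a,x)\notin\mathcal{E}$ for some (necessarily distinct) vertices $a,b,x$. I would delete the edge $(a,b)$ and add the edge $(a,x)$; this leaves $d_a$ fixed, lowers $d_b$ to $d_b-1$, and raises $d_x$ to $d_x+1$. The goal is to show that the resulting change $\Delta K = K_{new}-K_{old}$ is strictly negative, contradicting minimality.

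To evaluate $\Delta K$ I would split the affected pairs into three groups. First, the two toggled pairs $(a,b)$ and $(a,x)$ together contribute $f(d_a)\left(f(d_b-1)-f(d_x)\right)$, which is negative because $f$ is strictly increasing and $d_b-1<d_b\leq d_x$. Second, the pair $(b,x)$, if it is a non-edge, contributes $f(d_b-1)f(d_x+1)-f(d_b)f(d_x)$; since $f$ is positive and concave, $\log f$ is concave, so a product $f(s)f(t)$ with $s+t$ fixed can only decrease when the arguments are spread apart, and $(d_b-1,d_x+1)$ is more spread than $(d_b,d_x)$, making this term nonpositive (it is zero if $(b,x)$ is an edge). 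Third, the degree changes of $b$ and $x$ alter every non-edge joining them to the remaining vertices $P=\mathcal{V}\setminus\{a,b,x\}$; writing $\delta_b=f(d_b)-f(d_b-1)$ and $\delta_x=f(d_x+1)-f(d_x)$, and letting $A_b$ and $A_x$ denote the total affinity $\sum f(d_w)$ of the non-neighbors of $b$, respectively of $x$, inside $P$, this group equals $\delta_x A_x-\delta_b A_b$.

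The crux, and the step I expect to be the main obstacle, is showing $\delta_x A_x-\delta_b A_b\leq 0$. Concavity (decreasing increments, together with $d_b-1\leq d_x$) already gives $0<\delta_x\leq\delta_b$, so it suffices to prove $A_x\leq A_b$. I would write $A_b-A_x=\sum_{w\in S_1}f(d_w)-\sum_{w\in S_2}f(d_w)$, where $S_1$ consists of the vertices of $P$ adjacent to $x$ but not to $b$, and $S_2$ of those adjacent to $b$ but not to $x$. A direct count, using $a\in\mathcal{N}_b$ and $a\notin\mathcal{N}_x$, yields the identity $|S_1|-|S_2|=d_x-d_b+1\geq 1$, so $S_1$ has at least as many elements as $S_2$. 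Moreover $S_2\subseteq\mathcal{N}_b\cap\mathcal{N}_x^{c}$ and $S_1\subseteq\mathcal{N}_x$, so Lemma \ref{MultiplicativeDegreeCorrelations} applied to the pair $(b,x)$ (legitimate since $d_b\leq d_x$) forces every degree occurring in $S_1$ to be at least every degree occurring in $S_2$. Since $f$ is increasing, pairing each element of $S_2$ with a distinct element of $S_1$ of no-smaller degree and discarding the surplus gives $\sum_{S_1}f\geq\sum_{S_2}f$, i.e. $A_x\leq A_b$. Combining the three groups then yields $\Delta K<0$, the desired contradiction.

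Finally I would address connectivity, which is the only remaining wrinkle. The rewiring is problematic only when $d_b=1$, in which case deleting $(a,b)$ isolates $b$ (and also forces the evaluation of $f$ at $0$). This can be circumvented by running the same argument with a smallest-degree neighbor of $a$ that is not a pendant, or by appealing to the revert step already used in Algorithm \ref{MinAdditiveXtalkAlgorithm}; if the minimum is taken over all graphs of order $N$ and size $m$ as in the defining inequality, the issue does not arise at all.
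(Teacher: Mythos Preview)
Your argument is correct and follows essentially the same route as the paper: assume a violating triple, perform the single rewiring $(a,b)\mapsto(a,x)$, and use Lemma~\ref{MultiplicativeDegreeCorrelations} together with the monotonicity and concavity of $f$ to show $\Delta K<0$. The paper's decomposition is coarser---it writes $\Delta K$ as $f(d_a)(f(d_b)-f(d_c))$ plus the two increment-times-sum terms over $\mathcal{N}^{c}_{b-a}$ and $\mathcal{N}^{c}_{c-a}$, then compares $\sum_{\mathcal{N}_{b-a}}f$ with $\sum_{\mathcal{N}_{c-a}}f$ via Lemma~\ref{MultiplicativeDegreeCorrelations} and flips to complements using that the total $\sum f(d_w)$ is unchanged---whereas you isolate the pair $(b,x)$ explicitly and dispatch it by log-concavity, and you spell out the cardinality identity $|S_1|-|S_2|=d_x-d_b+1$ before matching terms. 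These refinements are not a different method, just tighter bookkeeping; in fact your version also covers the equality case $d_x=d_b$ and the connectivity issue, neither of which the paper's proof addresses.
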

\begin{proof}
Suppose that this is not the case, and there exists a vertex $a$ and two vertices $b$ and $c$ such that
\begin{equation}
(a,b) \in \mathcal{E}, \quad (a,c) \notin \mathcal{E} \quad \textrm{with} \quad d_{b}<d_{c}.
\end{equation}
Then, by rewiring the edge between $a$ and $b$ so that it now connects vertices $a$ and $c$, the difference in the crosstalk sensitivity will be
\begin{align}
\Delta K=&K_{new}-K_{old} \notag \\
=&f(d_{a})(f(d_{b})-f(d_{c})) \notag \\
&\quad+(f(d_{b}-1)-f(d_{b}))\sum _{u \in \mathcal{N}^{c}_{b-a}}f(d_{u}) \\
&\quad+ (f(d_{c}+1)-f(d_{c}))\sum _{v \in \mathcal{N}^{c}_{c-a}}f(d_{v}). \notag
\end{align}
Since $f$ is strictly increasing,
\begin{equation}
f(d_{b})<f(d_{c}), \quad f(d_{b}-1)-f(d_{b})<0 \quad \textrm{and} \quad f(d_{c}+1)-f(d_{c})>0
\end{equation} 
and from the concavity of $f$ we see that 
\begin{equation}
|f(d_{b}-1)-f(d_{b})|>|f(d_{c}+1)-f(d_{c})|.
\end{equation}
In addition, from Lemma \ref{MultiplicativeDegreeCorrelations},
\begin{equation}
\sum _{u \in \mathcal{N}_{b-a}}f(d_{u})<\sum _{v \in \mathcal{N}_{c-a}}f(d_{v}).
\end{equation}
Taking into account that the degrees of all other vertices in $\mathcal{G}$ remain constant after the transformation, 
\begin{equation}
\sum _{\substack{w \in \mathcal{V}(\mathcal{G})\\ w \neq b, w\neq c}}f(d_{w})=\textrm{constant}
\end{equation}
we find that
\begin{equation}
\sum _{u \in \mathcal{N}^{c}_{b-a}}f(d_{u})>\sum _{v \in \mathcal{N}^{c}_{c-a}}f(d_{v}).
\end{equation}
Combining the above equations, 
\begin{equation}
\Delta K <0.
\end{equation}
This means that the overall crosstalk sensitivity of the graph $\mathcal{G}$  has decreased after applying this transformation, which is a contradiction.
\end{proof}
The lemma above shows a necessary condition for the optimality of a network with respect to crosstalk: Every vertex should be connected with vertices of the largest degree possible.

\begin{corollary}
In the network with the minimum sensitivity to multiplicative crosstalk, the neighborhood set of every vertex $u$ with degree $d_{u}$ is 
\begin{equation}
\mathcal{N}_{u}=\{ N-d_{u}+1, N-d_{u}+2, \dots ,  N \} .
\label{MultiplicativeXtalkExplicitNeighbors}
\end{equation}
\label{MultiplicativeXtalkVertexNeighborhoods}
\end{corollary}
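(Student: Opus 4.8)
The plan is to read off (\ref{MultiplicativeXtalkExplicitNeighbors}) as an almost immediate bookkeeping consequence of Lemma \ref{MultiplicativeConnectToLargestDegrees}, once the labelling that makes the right-hand side meaningful is fixed. First I would relabel the vertices $1,\dots,N$ by nondecreasing degree, $d_{1}\leq d_{2}\leq\dots\leq d_{N}$, so that index $N$ carries the largest degree and the set $\{N-d_{u}+1,\dots,N\}$ denotes precisely the $d_{u}$ vertices of highest degree. Under this reading the corollary asserts nothing more than the following: in a minimum-crosstalk graph every vertex spends all of its $d_{u}$ edges on the highest-degree vertices available.

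The engine of the argument is to upgrade the lemma from a per-edge statement to a statement about the whole neighbourhood $\mathcal{N}_{u}$. Lemma \ref{MultiplicativeConnectToLargestDegrees} gives $(u,b)\in\mathcal{E}\Rightarrow(u,x)\in\mathcal{E}$ for every $x$ with $d_{x}\geq d_{b}$; hence $\mathcal{N}_{u}$ is \emph{upward closed} in the degree order, in the sense that whatever $u$ reaches it also reaches everything of at least that degree. An upward-closed subset of the degree order with the prescribed cardinality $|\mathcal{N}_{u}|=d_{u}$ is forced to be the top segment, i.e.\ the $d_{u}$ largest-degree vertices, which in the chosen labelling is exactly $\{N-d_{u}+1,\dots,N\}$. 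Along the way I would also extract the companion nesting property $d_{a}\leq d_{b}\Rightarrow\mathcal{N}_{a}\setminus\{b\}\subseteq\mathcal{N}_{b}\setminus\{a\}$ (apply Lemma \ref{MultiplicativeConnectToLargestDegrees} to any $x\in\mathcal{N}_{a}\setminus\{b\}$, using $d_{b}\geq d_{a}$), since this is what certifies that the neighbourhoods stack into the staircase adjacency pattern of a threshold graph and that the construction is self-consistent across all vertices.

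The delicate point, and the one I expect to be the real obstacle, is the self-reference and the handling of ties: a high-degree $u$ can fall inside its own index window $\{N-d_{u}+1,\dots,N\}$ while of course $u\notin\mathcal{N}_{u}$, and a block of equal-degree vertices competes for the same indices. The honest resolution is to read (\ref{MultiplicativeXtalkExplicitNeighbors}) up to the ordering inside each equal-degree block: by the nesting property the members of such a block share a common neighbourhood after each is deleted from the other's, so a block is either a clique or an independent set and its internal order is invisible to every other vertex. I would therefore make the identity exact by the convention of excluding $u$ from its own window, i.e.\ reading the right-hand side as ``the $d_{u}$ highest-degree vertices other than $u$.'' The genuine mathematical content, that the $d_{u}$ edges land at the top of the degree order, is already delivered by the upward-closure argument; what remains is the purely clerical verification that the staircase is consistent for every vertex simultaneously, which the nesting property supplies.
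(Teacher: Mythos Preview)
Your proposal is correct and follows exactly the route the paper intends: the corollary is stated without proof, as an immediate consequence of Lemma~\ref{MultiplicativeConnectToLargestDegrees} once the vertices are ordered by nondecreasing degree as in~(\ref{VertexDegreesInOrder}). You are in fact more careful than the paper, which silently adopts the labelling convention and does not address the self-reference or tie-breaking issues you flag; those are genuine bookkeeping subtleties that the paper simply elides.
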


\begin{corollary}
In a network with the minimum sensitivity to multiplicative crosstalk, there is at least one vertex with full degree.
\label{MultiplicativeXtalkFullDegreeVertex}
\end{corollary}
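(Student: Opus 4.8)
The plan is to take a vertex $w$ of maximum degree and show that it must be adjacent to all $N-1$ other vertices, so that $d_{w}=N-1$. The guiding observation is that Lemma \ref{MultiplicativeConnectToLargestDegrees} already forces every vertex to attach to the highest-degree vertices available, and $w$ is precisely such a vertex for everyone else.

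First I would fix $w$ with $d_{w}=\max_{u}d_{u}$ and pick an arbitrary vertex $u\neq w$. Because all graphs under consideration are connected, $u$ has at least one neighbour $b\in\mathcal{N}_{u}$, so $(u,b)\in\mathcal{E}$. Since $w$ has maximum degree, $d_{w}\geq d_{b}$. Applying Lemma \ref{MultiplicativeConnectToLargestDegrees} with $a=u$ and $x=w$ (with $b$ the chosen neighbour) then yields $(u,w)\in\mathcal{E}$; the borderline case $b=w$ is already the desired conclusion, and $x=u$ is excluded since $u\neq w$. As $u$ was arbitrary, $w$ is adjacent to every other vertex and hence has full degree. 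The same fact can also be read off directly from Corollary \ref{MultiplicativeXtalkVertexNeighborhoods}: ordering the vertices by increasing degree, the top vertex $N$ lies in $\mathcal{N}_{u}=\{N-d_{u}+1,\dots,N\}$ exactly when $d_{u}\geq 1$, which holds for every vertex of a connected graph, so all others are adjacent to vertex $N$.

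The only step that requires care is the appeal to connectivity: the conclusion would fail if isolated vertices were permitted, since then a maximum-degree vertex need not reach them. Thus the main (and rather minor) obstacle is to guarantee that every vertex possesses a neighbour to which the lemma can be applied, which is ensured by the standing restriction to connected graphs. Beyond that, the statement is essentially an immediate consequence of the nested-neighbourhood structure established in the preceding lemma and corollary.
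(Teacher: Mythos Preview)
Your proposal is correct and essentially matches the paper's proof. The paper argues exactly along the lines of your second paragraph's alternative: by Corollary~\ref{MultiplicativeXtalkVertexNeighborhoods}, each vertex $u$ is joined to the $d_u$ vertices of largest degree, and since connectedness gives $d_u\geq 1$, the maximum-degree vertex lies in every $\mathcal{N}_u$. Your primary argument via Lemma~\ref{MultiplicativeConnectToLargestDegrees} is just the same idea one step earlier in the chain, and your remark about needing connectivity (to guarantee a neighbour $b$ to which the lemma applies) is precisely the hypothesis the paper invokes.
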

\begin{proof}
From Corollary \ref{MultiplicativeXtalkVertexNeighborhoods}, $u$ will connect to the $d_{u}$ vertices with the largest degrees.
In addition, every vertex $u$ has degree $d_{u} \geq 1$, since we assumed that the graph is connected.
As a result, the vertex with the largest degree in the network is connected to all other $N-1$ vertices in the graph.
\end{proof}

Now we are ready to prove the main theorem of this section.
\begin{theorem}
If the crosstalk affinity of each vertex is a positive, increasing and concave function of its degree, then a graph with minimum overall multiplicative crosstalk sensitivity consists of a complete or almost complete subgraph, and the vertices that do not belong to it have degree equal to $1$. All vertices of degree $1$ have the same common neighbor, which is a vertex with full degree.
\end{theorem}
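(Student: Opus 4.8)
The plan is to combine the three structural results already proved for any optimizer with the observation that, once the neighbourhoods are nested, only the degree sequence is left free. By Corollary~\ref{MultiplicativeXtalkVertexNeighborhoods} every minimizer is a graph with nested neighbourhoods (that is, a threshold graph): ordering the vertices $1,\dots,N$ by nondecreasing degree, each vertex $u$ is joined exactly to the $d_u$ vertices of largest degree, so the edge set, and hence the whole graph, is fixed by the degree sequence alone. Writing $S_1=\sum_u f(d_u)$ and $S_2=\sum_u f(d_u)^2$, equation~(\ref{DefinitionMultiplicativeEquation}) can be rewritten as
\begin{equation}
K=\tfrac12\left(S_1^2-S_2\right)-\sum_{(u,v)\in\mathcal{E}}f(d_u)f(d_v),
\end{equation}
so that for a \emph{fixed} degree sequence, minimizing $K$ is equivalent to maximizing the edge sum, which is exactly what the ``connect to the largest degrees'' rule of Lemma~\ref{MultiplicativeConnectToLargestDegrees} accomplishes. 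Thus the only remaining task is to pin down the optimal degree sequence.

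First I would dispose of the degree-$1$ vertices. Such a vertex has a single neighbour, and by Lemma~\ref{MultiplicativeConnectToLargestDegrees} that neighbour must have maximal degree; by Corollary~\ref{MultiplicativeXtalkFullDegreeVertex} the maximal-degree vertex has full degree. Hence every degree-$1$ vertex is attached to the same full-degree hub, which is precisely the final assertion of the theorem. It then remains to prove that the vertices of degree at least $2$ induce a complete or almost complete subgraph, equivalently that the optimal degree sequence is \emph{bimodal}: apart from one possible peripheral value, every degree is either $1$ or equal to the common degree of the dense core.

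The core of the argument is a spreading (majorization) step in the spirit of Lemma~\ref{ConcaveLemma}. Suppose the optimizer contained two vertices $p,q$ with $2\le d_p\le d_q\le N-2$ that do not fit the almost-complete pattern. I would transfer one unit of degree from $p$ to $q$, passing from $(d_p,d_q)$ to $(d_p-1,d_q+1)$, and realize this inside the nested-neighbourhood family by the double rewiring that deletes the edge from $p$ to its lowest-degree neighbour and adds the edge from $q$ to its highest-degree non-neighbour. Re-optimizing the structure for the new sequence through Lemmas~\ref{MultiplicativeConnectToLargestDegrees} and~\ref{MultiplicativeDegreeCorrelations}, the concavity of $f$ renders the increment $f(d_q+1)-f(d_q)$ smaller than the decrement $f(d_p)-f(d_p-1)$; tracking these against the cancelling contributions of the remaining vertices, while the total number of edges is preserved, drives $K$ strictly down and contradicts optimality. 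Iterating collapses every intermediate degree to one of the two extremes, leaving a dense core together with degree-$1$ leaves, and the residual arithmetic needed to match $m$ exactly accounts for the single almost-complete peripheral vertex.

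The step I expect to be the main obstacle is the spreading move itself. Unlike the single-edge rewiring available in the additive case, in a nested-neighbourhood graph no neighbour of the smaller vertex $p$ fails to be a neighbour of the larger vertex $q$, so the unit of degree must be moved by a carefully chosen double rewiring; one must check simultaneously that such a move always exists whenever the sequence is not yet bimodal, that it preserves connectedness and the nested structure so the comparison for $K$ remains valid, and that the resulting $\Delta K$ is strictly (not merely weakly) negative. Carrying out the $\Delta K$ bookkeeping cleanly, isolating the surviving difference between $f(d_p)-f(d_p-1)$ and $f(d_q+1)-f(d_q)$ while the other vertices' terms cancel, together with the boundary analysis that yields exactly one peripheral vertex, is where the genuine work lies.
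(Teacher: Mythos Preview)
Your plan is essentially the paper's own argument: reduce via Lemmas~\ref{MultiplicativeDegreeCorrelations}--\ref{MultiplicativeConnectToLargestDegrees} and Corollary~\ref{MultiplicativeXtalkVertexNeighborhoods} to the threshold (nested-neighbourhood) family, then perform a degree-spreading rewiring that moves one unit of degree from a small-degree vertex to a larger one and use concavity of $f$ to obtain $\Delta K<0$. The paper carries out exactly this move with the concrete choice $a=$ smallest vertex of degree $>1$, $b=a+1$, deleting $(a,N{+}1{-}d_a)$ and adding $(b,N{-}d_b)$, and then writes out the $\Delta K$ expression term by term; your identification of the $\Delta K$ bookkeeping as ``where the genuine work lies'' is accurate, and your $S_1,S_2$ rewriting of $K$ is a small organizational variant not present in the paper but otherwise the strategies coincide.
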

\begin{proof}
We will start from any graph $\mathcal{G}$ that satisfies equations  (\ref{MultipicativeXtalkNeighborDegrees}) and (\ref{MultiplicativeConnectToLargestDegreesEq}) but does not have the aformentioned form.
Then, with transformations that only reduce the sensitivity to crosstalk, we will arrive at the graph with this structure.
Suppose that the initial graph $\mathcal{G}$ has a degree distribution 
\begin{equation}
\mathbf{d}=[d_{1}, d_{2}, \dots , d_{N}]
\end{equation}
where the elements of the vertex set $\mathcal{V}(\mathcal{G})=\{1,2,\dots N \} $ are ordered in increasing degree, such that 
\begin{equation}
1\leq d_{1}\leq d_{2} \leq \dots \leq d_{N} \leq N-1.
\label{VertexDegreesInOrder}
\end{equation}
The overall sensitivity to crosstalk can be written as
\begin{align}
K(\mathcal{G})=&\sum _{(u,v) \notin \mathcal{E}(\mathcal{G})} f(d_{u})f(d_{v}) \notag \\
=& \sum _{u \in \mathcal{V}(\mathcal{G})}f(d_{u}) \Bigg [ \sum _{\substack{v \in \mathcal{V}(\mathcal{G})\\ v > u, v \notin \mathcal{N}_{u} }} f(d_{v}) \Bigg ] .
\end{align}
Applying the above equation to our specific $\mathcal{V}(\mathcal{G})$, 
\begin{equation}
K(\mathcal{G})=\sum _{u=1}^{N} f(d_{u}) \Bigg [ \sum _{\substack{v=u+1\\ v \notin \mathcal{N}_{u} }}^{N} f(d_{v}) \Bigg ] .
\end{equation}
Taking into account Corollary \ref{MultiplicativeXtalkVertexNeighborhoods}, we can simplify the expression of the overall crosstalk sensitivity as
\begin{equation}
K(\mathcal{G})=\sum _{u=1}^{N} f(d_{u}) \Bigg [ \sum _{v=u+1}^{N-d_{u}} f(d_{v}) \Bigg ] .
\end{equation}
Now, suppose that $\mathbf{d}$ is not a degree sequence that corresponds to the form in the statement of the theorem.
We pick the vertex $a$ with the smallest possible degree such that $d_{a}>1$.
We also pick vertex $b=a+1$, with degree $d_{b}\geq d_{a}$.
From equation (\ref{MultiplicativeXtalkExplicitNeighbors}), we get
\begin{equation}
\mathcal{N}_{a} \subseteq \mathcal{N}_{b}.
\end{equation}
More precisely,
\begin{equation}
\mathcal{N}_{a}=\{N-d_{a}+1, N-d_{a}+2, \dots ,N\} \quad \textrm{and} \quad \mathcal{N}_{b}=\{N-d_{b}+1, N-d_{b}+2, \dots ,N\}.
\end{equation}
By successively rewiring edges in $\mathcal{G}$, we will change the overall sensitivity each time by $\Delta K$, and when the graph is transformed to $\mathcal{G}^{*}$ with 
\begin{equation}
K(\mathcal{G}^{*}) \leq K(\mathcal{G})
\end{equation}
no additional transformations will be possible.
If $\mathcal{G}$ does not have the form in the statement of the theorem, there is at least one edge in the graph that can be rewired from vertex $a$ to vertex $b$.
We delete the edge $(a,N+1-d_{a})$, and add an edge to connect vertices $b$ and $N-d_{b}$.
It is worth noting that according to Corollary \ref{MultiplicativeXtalkFullDegreeVertex}, any such transformation will keep the graph connected, because the vertex with full degree is connected to every other vertex in the network as long as we do not remove any edge from a vertex with degree $1$.
The size of the graph remains the same, and the new neighborhoods of vertices $a$ and $b$ are respectively
\begin{equation}
\mathcal{N}_{a}'=\{N-d_{a}+2, N-d_{a}+3 \dots N\} \quad \textrm{and} \quad \mathcal{N}_{b}'=\{N-d_{b}, N-d_{b}+1, \dots N\}.
\end{equation}
We denote  the vertices with the smallest degrees  that are neighbors to $b$ and $a$ as $C=N+1-d_{b}$ and $D=N+1-d_{a}$ respectively.
According to inequality (\ref{VertexDegreesInOrder}),
\begin{equation}
C\leq D \implies d_{C}\leq d_{D}.
\end{equation}
The difference in crosstalk after applying the transformation is 
\begin{align}
\Delta K =&f(d_{a}-1) \sum_{u=a+1}^{N+1-d_{a}}f(d_{u}) + f(d_{b}+1)\sum_{v=b+1}^{N-1-d_{b}}f(d_{v}) \notag \\
&\quad -f(d_{a}) \sum_{u=a+1}^{N-d_{a}}f(d_{u}) + f(d_{b})\sum_{v=b+1}^{N-d_{b}}f(d_{v}) \notag \\
&\quad +(f(d_{C}+1)+f(d_{D}-1))\sum_{z=1}^{a-1}f(1) \\
&\quad -(f(d_{C})+f(d_{D}))\sum_{z=1}^{a-1}f(1). \notag
\end{align}
Rearranging the terms, we get
\begin{align}
\Delta K =&[f(d_{a}-1)-f(d_{a})]\Bigg [\sum_{u=a+1}^{N-d_{a}}f(d_{u})  \Bigg ] \notag \\
&\quad + [f(d_{b}+1)-f(d_{b})] \Bigg [ \sum_{v=b+1}^{N-d_{b}}f(d_{v}) \Bigg ]  \\
&\quad +[f(d_{a}-1)f(d_{D})-f(d_{b}+1)f(d_{C})]  \notag  \\
&\quad +[f(d_{D}-1)-f(d_{D}) +f(d_{C}+1)-f(d_{C}) ] \Bigg [ \sum_{z=1}^{a-1}f(1)  \Bigg ] . \notag
\end{align}
The sum above is clearly negative, so the transformation reduces the overall sensitivity to crosstalk of graph $\mathcal{G}$.
We may repeat the process described here, reducing the overall sensitivity at each step, until we transform the initial graph to the graph described in the theorem statement.
\end{proof}

Figure \ref{All6Multiplicative} shows all the graphs of order $N=6$ and size $5\leq m \leq 15$ with the minimum overall crosstalk sensitivity. 
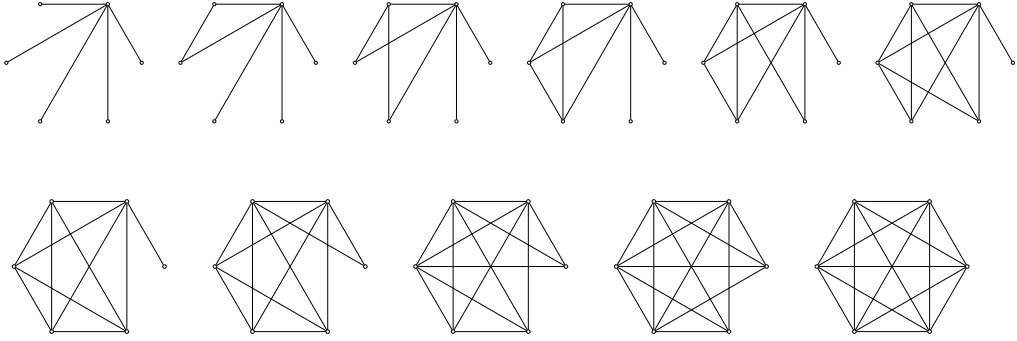
\begin{figure}[htbp]
\subfigure{
\psscalebox{0.18}{
\begin{pspicture}(-5,-5)(5,6)
{
\cnodeput(2.5,4.33){A}{}
\cnodeput(-2.5,4.33){B}{}
\cnodeput(-5,0){C}{}
\cnodeput(-2.5,-4.33){D}{}
\cnodeput(2.5,-4.33){E}{}
\cnodeput(5,0){F}{}
}
\ncline{-}{A}{B}
\ncline{-}{A}{C}
\ncline{-}{A}{D}
\ncline{-}{A}{E}
\ncline{-}{A}{F}
\end{pspicture}
}
}
\subfigure{
\psscalebox{0.18}{
\begin{pspicture}(-5,-5)(5,8)
{
\cnodeput(2.5,4.33){A}{}
\cnodeput(-2.5,4.33){B}{}
\cnodeput(-5,0){C}{}
\cnodeput(-2.5,-4.33){D}{}
\cnodeput(2.5,-4.33){E}{}
\cnodeput(5,0){F}{}
}
\ncline{-}{A}{B}
\ncline{-}{A}{C}
\ncline{-}{A}{D}
\ncline{-}{A}{E}
\ncline{-}{A}{F}
\ncline{-}{B}{C}
\end{pspicture}
}
}
\subfigure{
\psscalebox{0.18}{
\begin{pspicture}(-5,-5)(5,8)
{
\cnodeput(2.5,4.33){A}{}
\cnodeput(-2.5,4.33){B}{}
\cnodeput(-5,0){C}{}
\cnodeput(-2.5,-4.33){D}{}
\cnodeput(2.5,-4.33){E}{}
\cnodeput(5,0){F}{}
}
\ncline{-}{A}{B}
\ncline{-}{A}{C}
\ncline{-}{A}{D}
\ncline{-}{A}{E}
\ncline{-}{A}{F}
\ncline{-}{B}{C}
\ncline{-}{B}{D}
\end{pspicture}
}
}
\subfigure{
\psscalebox{0.18}{
\begin{pspicture}(-5,-5)(5,8)
{
\cnodeput(2.5,4.33){A}{}
\cnodeput(-2.5,4.33){B}{}
\cnodeput(-5,0){C}{}
\cnodeput(-2.5,-4.33){D}{}
\cnodeput(2.5,-4.33){E}{}
\cnodeput(5,0){F}{}
}
\ncline{-}{A}{B}
\ncline{-}{A}{C}
\ncline{-}{A}{D}
\ncline{-}{A}{E}
\ncline{-}{A}{F}
\ncline{-}{B}{C}
\ncline{-}{B}{D}
\ncline{-}{C}{D}
\end{pspicture}
}
}
\subfigure{
\psscalebox{0.18}{
\begin{pspicture}(-5,-5)(5,8)
{
\cnodeput(2.5,4.33){A}{}
\cnodeput(-2.5,4.33){B}{}
\cnodeput(-5,0){C}{}
\cnodeput(-2.5,-4.33){D}{}
\cnodeput(2.5,-4.33){E}{}
\cnodeput(5,0){F}{}
}
\ncline{-}{A}{B}
\ncline{-}{A}{C}
\ncline{-}{A}{D}
\ncline{-}{A}{E}
\ncline{-}{A}{F}
\ncline{-}{B}{C}
\ncline{-}{B}{D}
\ncline{-}{C}{D}
\ncline{-}{B}{E}
\end{pspicture}
}
}
\subfigure{
\psscalebox{0.18}{
\begin{pspicture}(-5,-5)(5,8)
{
\cnodeput(2.5,4.33){A}{}
\cnodeput(-2.5,4.33){B}{}
\cnodeput(-5,0){C}{}
\cnodeput(-2.5,-4.33){D}{}
\cnodeput(2.5,-4.33){E}{}
\cnodeput(5,0){F}{}
}
\ncline{-}{A}{B}
\ncline{-}{A}{C}
\ncline{-}{A}{D}
\ncline{-}{A}{E}
\ncline{-}{A}{F}
\ncline{-}{B}{C}
\ncline{-}{B}{D}
\ncline{-}{C}{D}
\ncline{-}{B}{E}
\ncline{-}{C}{E}
\end{pspicture}
}
}
\subfigure{
\psscalebox{0.2}{
\begin{pspicture}(-5.5,-5)(5.5,8)
{
\cnodeput(2.5,4.33){A}{}
\cnodeput(-2.5,4.33){B}{}
\cnodeput(-5,0){C}{}
\cnodeput(-2.5,-4.33){D}{}
\cnodeput(2.5,-4.33){E}{}
\cnodeput(5,0){F}{}
}
\ncline{-}{A}{B}
\ncline{-}{A}{C}
\ncline{-}{A}{D}
\ncline{-}{A}{E}
\ncline{-}{A}{F}
\ncline{-}{B}{C}
\ncline{-}{B}{D}
\ncline{-}{C}{D}
\ncline{-}{B}{E}
\ncline{-}{C}{E}
\ncline{-}{D}{E}
\end{pspicture}
}
}
\subfigure{
\psscalebox{0.2}{
\begin{pspicture}(-5.5,-5)(5.5,8)
{
\cnodeput(2.5,4.33){A}{}
\cnodeput(-2.5,4.33){B}{}
\cnodeput(-5,0){C}{}
\cnodeput(-2.5,-4.33){D}{}
\cnodeput(2.5,-4.33){E}{}
\cnodeput(5,0){F}{}
}
\ncline{-}{A}{B}
\ncline{-}{A}{C}
\ncline{-}{A}{D}
\ncline{-}{A}{E}
\ncline{-}{A}{F}
\ncline{-}{B}{C}
\ncline{-}{B}{D}
\ncline{-}{C}{D}
\ncline{-}{B}{E}
\ncline{-}{C}{E}
\ncline{-}{D}{E}
\ncline{-}{B}{F}
\end{pspicture}
}
}
\subfigure{
\psscalebox{0.2}{
\begin{pspicture}(-5.5,-5)(5.5,8)
{
\cnodeput(2.5,4.33){A}{}
\cnodeput(-2.5,4.33){B}{}
\cnodeput(-5,0){C}{}
\cnodeput(-2.5,-4.33){D}{}
\cnodeput(2.5,-4.33){E}{}
\cnodeput(5,0){F}{}
}
\ncline{-}{A}{B}
\ncline{-}{A}{C}
\ncline{-}{A}{D}
\ncline{-}{A}{E}
\ncline{-}{A}{F}
\ncline{-}{B}{C}
\ncline{-}{B}{D}
\ncline{-}{C}{D}
\ncline{-}{B}{E}
\ncline{-}{C}{E}
\ncline{-}{D}{E}
\ncline{-}{B}{F}
\ncline{-}{C}{F}
\end{pspicture}
}
}
\subfigure{
\psscalebox{0.2}{
\begin{pspicture}(-5.5,-5)(5.5,8)
{
\cnodeput(2.5,4.33){A}{}
\cnodeput(-2.5,4.33){B}{}
\cnodeput(-5,0){C}{}
\cnodeput(-2.5,-4.33){D}{}
\cnodeput(2.5,-4.33){E}{}
\cnodeput(5,0){F}{}
}
\ncline{-}{A}{B}
\ncline{-}{A}{C}
\ncline{-}{A}{D}
\ncline{-}{A}{E}
\ncline{-}{A}{F}
\ncline{-}{B}{C}
\ncline{-}{B}{D}
\ncline{-}{C}{D}
\ncline{-}{B}{E}
\ncline{-}{C}{E}
\ncline{-}{D}{E}
\ncline{-}{B}{F}
\ncline{-}{C}{F}
\ncline{-}{D}{F}
\end{pspicture}
}
}
\subfigure{
\psscalebox{0.2}{
\begin{pspicture}(-5.5,-5)(5.5,8)
{
\cnodeput(2.5,4.33){A}{}
\cnodeput(-2.5,4.33){B}{}
\cnodeput(-5,0){C}{}
\cnodeput(-2.5,-4.33){D}{}
\cnodeput(2.5,-4.33){E}{}
\cnodeput(5,0){F}{}
}
\ncline{-}{A}{B}
\ncline{-}{A}{C}
\ncline{-}{A}{D}
\ncline{-}{A}{E}
\ncline{-}{A}{F}
\ncline{-}{B}{C}
\ncline{-}{B}{D}
\ncline{-}{C}{D}
\ncline{-}{B}{E}
\ncline{-}{C}{E}
\ncline{-}{D}{E}
\ncline{-}{B}{F}
\ncline{-}{C}{F}
\ncline{-}{D}{F}
\ncline{-}{E}{F}
\end{pspicture}
}
}
\caption{Connected graphs of order $N=6$ and size $5\leq m \leq 15$ which have the smallest overall crosstalk sensitivity, when the pairwise crosstalk intensity between two vertices is equal to the product of the individual affinities, and the affinity of each vertex is a positive, increasing and concave function of its degree.}
\label{All6Multiplicative}
\end{figure}

\begin{theorem}
If the crosstalk affinity of each vertex is a positive, increasing and concave function $f$ of its degree, then the minimum possible overall multiplicative crosstalk sensitivity of a graph is equal to 
\begin{align}
K =&(d-1-\alpha)f(\alpha)f(d-2)+(N-d)f(\alpha)f(1) \notag \\
&\quad +(N-d)(\alpha-1)f(d-1)f(1) \notag \\
&\quad +(N-d)(d-1-\alpha)f(d-2)f(1) \\
&\quad +{N-d \choose 2}f^{2}(1)  \notag
\end{align}
where 
\begin{equation}
d= \Bigg\lceil { \frac{1}{2} \left( 3 + \sqrt{9+8m-8N} \right) } \Bigg\rceil
\end{equation}
is the number of vertices in the (almost) complete subgraph and 
\begin{equation}
\alpha=m-{d-1 \choose 2}-(N-d)
\end{equation}
is the degree of the vertex with the smallest number of neighbors in it.
\end{theorem}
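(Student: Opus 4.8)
The plan is to treat the preceding structure theorem as a black box and obtain $K$ by a direct enumeration of the non-adjacent pairs of $\mathcal{G}^{*}$. By that theorem a minimizer consists of a complete or almost complete core on $d$ vertices together with $N-d$ pendant vertices of degree $1$, all of which share the single full-degree vertex as their common neighbor. The first task is therefore to express the two parameters $d$ and $\alpha$ in terms of $N$ and $m$.

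To locate $d$, I note that a core on $d$ vertices together with $N-d$ pendants can accommodate at most ${d \choose 2}+(N-d)$ edges, this maximum being attained by a complete core $K_{d}$ plus the $N-d$ pendant edges. Requiring $m \leq {d \choose 2}+(N-d)$ and clearing denominators gives $d^{2}-3d+2(N-m)\geq 0$, an upward parabola in $d$; taking the least integer $d$ lying on its right-hand branch yields $d=\lceil \tfrac{1}{2}(3+\sqrt{9+8m-8N})\rceil$, as claimed. Because the ${d-1 \choose 2}$ edges of the clique on the $d-1$ central vertices and the $N-d$ pendant edges are forced, the leftover $\alpha = m-{d-1 \choose 2}-(N-d)$ edges are exactly the edges incident to the core's peripheral vertex, i.e. its degree; the minimality of $d$ guarantees $1\leq \alpha \leq d-1$.

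With $d$ and $\alpha$ fixed, I classify the vertices by degree using Corollary \ref{MultiplicativeXtalkVertexNeighborhoods}, which forces every vertex to attach to the highest available degrees. This produces five classes: the unique full-degree vertex of degree $N-1$; the core-peripheral vertex of degree $\alpha$; the $\alpha-1$ clique vertices (other than the full-degree one) that the peripheral vertex connects to, each of degree $d-1$; the $d-1-\alpha$ remaining clique vertices, each of degree $d-2$; and the $N-d$ pendants of degree $1$. A quick tally verifies that these counts sum to $N$ and that the degrees sum to $2m$, and in particular that the peripheral vertex's core-neighbors are precisely the full-degree vertex and the $\alpha-1$ degree-$(d-1)$ vertices.

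It then remains to list the missing edges and weight each by $f(d_{u})f(d_{v})$. The full-degree vertex is adjacent to everything and the clique vertices are mutually adjacent, so every non-edge meets either the peripheral vertex or a pendant, splitting the non-edges into five disjoint and exhaustive families: peripheral--medium ($d-1-\alpha$ pairs, weight $f(\alpha)f(d-2)$), pendant--peripheral ($N-d$, weight $f(\alpha)f(1)$), pendant--high ($(N-d)(\alpha-1)$, weight $f(d-1)f(1)$), pendant--medium ($(N-d)(d-1-\alpha)$, weight $f(d-2)f(1)$), and pendant--pendant (${N-d \choose 2}$, weight $f^{2}(1)$). Summing these five contributions reproduces the stated expression term by term. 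The step I would double-check most carefully is the identification of the peripheral vertex's neighborhood in the previous paragraph, since it is precisely this fact that makes its core non-neighbors exactly the $d-1-\alpha$ medium vertices and keeps the five families both disjoint and complete; a cross-check that the five family sizes sum to ${N \choose 2}-m$ provides a convenient safeguard.
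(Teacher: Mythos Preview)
Your proposal is correct and follows essentially the same route as the paper: both invoke the preceding structure theorem, solve the identical quadratic $d^{2}-3d+2(N-m)=0$ (the paper reaches it by setting $\alpha=d_{0}-1$, you by bounding the edge capacity of a $d$-core), take the ceiling of the positive root, recover $\alpha$ from the edge count, and then tally the same five families of non-edges $K_{1},\dots,K_{5}$. Your explicit five-class vertex partition and the ${N\choose 2}-m$ cross-check are a nice organizational touch, but the underlying argument is the paper's.
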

\begin{proof}
The overall sensitivity will be calculated by computing intensity of the crosstalk interactions among the different vertex groups in the minimum crosstalk sensitivity graphs.
Suppose that the almost complete (or complete) subgraph consists of $d$ vertices, $d-1$ of which form a complete subgraph and the peripheral vertex of the subgraph has degree has degree $\alpha$.
The degree of the peripheral vertex  is allowed to be equal to $d-1$, in which case, we have a complete subgraph.
In addition, there are $N-d$ vertices with degree equal to $1$.
Counting all edges of the graph, we find that
\begin{equation}
m={d-1 \choose 2}+\alpha + (N-d).
\label{SumOfEdges}
\end{equation}
This equation needs to be solved for the variables $d$ and $\alpha$, under the conditions
\begin{equation}
\alpha \in \mathbb{N}^{*}, d \in \mathbb{N}^{*}
\end{equation}
and
\begin{equation}
2\leq \alpha \leq d-1.
\end{equation}
As mentioned above, when $\alpha=d-1$, we have a complete subgraph, otherwise we have an almost complete subgraph.
In order to find the order of the almost complete graph, we assume that the optimal graph consists of a complete subgraph of order $d_{0}$, and that $\alpha=d_{0}-1$. 
Then the equation that needs to be solved has only one variable $d_{0}$
\begin{equation}
{d_{0} \choose 2}+ (N-d_{0})=m
\end{equation}
which after rearranging the terms becomes a second order polynomial in $d_{0}$, namely
\begin{equation}
d^{2}_{0}-3d_{0}+(2N-2m)=0
\end{equation}
with roots
\begin{equation}
d_{0}=\frac{1}{2} \left( 3 \pm \sqrt{9+8m-8N} \right).
\end{equation}
One of the solutions will be zero or negative ($m\geq N$ for connected graphs that are not trees), so we need to pick the positive solution.
In addition, $d$ needs to be an integer.
Since we have artificially increased $\alpha$ in the above solution, $d\geq d_{0}$ and $d$ is the smallest integer that is larger than $d_{0}$, so we need to pick the ceiling of the positive real number from the above equation, 
\begin{equation}
d= \Bigg\lceil { \frac{1}{2} \left( 3 + \sqrt{9+8m-8N} \right) } \Bigg\rceil.
\end{equation}
Given the value of $d$, the value of $\alpha$ can be found from equation (\ref{SumOfEdges}).
Now that the structure of the optimal network is specified, we need to add up all the crosstalk contributions from all the vertices of the graph:
\begin{itemize}

\item Crosstalk between the peripheral vertex and the $d-1-\alpha$ vertices of degree $d-2$ of the almost complete graph that it is not connected to:
\begin{equation}
K_{1}=(d-1-\alpha)f(\alpha)f(d-2).
\end{equation}
\item Crosstalk between the peripheral vertex and the $N-d$ single-edge vertices:
\begin{equation}
K_{2}=(N-d)f(\alpha)f(1).
\end{equation}
\item Crosstalk among the single-edge vertices:
\begin{equation}
K_{3}={N-d \choose 2}f^{2}(1).
\end{equation}
\item Crosstalk among the vertices that are connected to the peripheral vertex and the single-edge vertices:
\begin{equation}
K_{4}=(\alpha-1)(N-d)f(d-1)f(1).
\end{equation}
In the equation above, the first term is $\alpha -1$ instead of $\alpha$ because the vertices of degree $1$ are all connected to one of these $\alpha$ vertices of the (almost) complete graph, and the crosstalk of this vertex is zero.
\item Crosstalk among the vertices of the almost complete graph that are not connected to the peripheral vertex, and the single-edge vertices:
\begin{equation}
K_{5}=(d-1-\alpha)(N-d)f(d-2)f(1).
\end{equation}
\end{itemize}

\begin{figure}[htbp]
\centering
\includegraphics[scale=0.55]{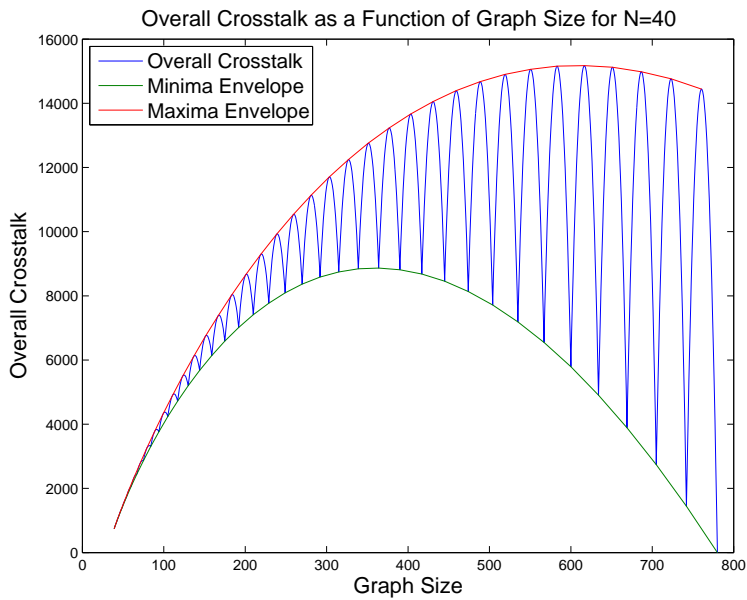} 
\caption{The minimum possible overall crosstalk sensitivity of a graph with $40$ vertices and a size of $39\leq m\leq 780$ when the vertex crosstalk affinity is $f(d)=d$. The green and red lines are the envelopes for the local minima and local maxima respectively.}
 \label{MultiplicativeOverallCrosstalkExample}
\end{figure}

Adding up the above terms, we get the total amount of crosstalk interactions in the network:
\begin{equation}
K=K_{1}+K_{2}+K_{3}+K_{4}+K_{5}.
\end{equation}
\begin{align}
K =&\underbrace{f(\alpha) \left[ (d-1-\alpha)f(d-2)+(N-d)f(1) \right]}_{\textrm{Total crosstalk of the peripheral vertex}} \notag \\
&+\underbrace{(N-d)f(1) \left[ (\alpha-1)f(d-1)+(d-1-\alpha)f(d-2) \right]}_{\textrm{Crosstalk between the almost complete graph and the single-edge vertices}} \label{TotalMultiplicativeXtalk} \\
&+ \underbrace{ {N-d \choose 2}f^{2}(1).}_{\textrm{Crosstalk among the single-edge vertices}} \notag
\end{align}
\end{proof}
An example of the overall crosstalk sensitivity as we increase the size of the graph of given order is shown in Figure \ref{MultiplicativeOverallCrosstalkExample}.

We notice that the overall crosstalk sensitivity has ``ripples''.
This happens because as we add more edges, the average degree of the vertices increases, and the average crosstalk intensity among vertices also becomes larger.
After adding one edge between two edges, the average decrease in crosstalk also increases, because there is no spurious interaction between them any more.
Consequently, the variance of the overall crosstalk sensitivity increases with number of edges in the network.
The local minima are achieved when the graph consists of a complete subgraph and the rest of the vertices have degree equal to one.

\subsection{Minimization of Crosstalk Sensitivity For Networks with Fixed Degree Sequence}
Assume that we have a network with a given degree distribution
\begin{equation}
\textbf{d}=[d_{1}, d_{2}, \dots, d_{N} ] 
\end{equation}
where the degree of each vertex $1\leq k \leq N$ is fixed and equal to $d_{k}$.
If we are free to choose the structure of the network, in order to minimize the overall crosstalk sensitivity of the network, we need to connect vertices with similar degrees together, as the next theorem shows.

\begin{theorem}
Suppose that we have a network with a given degree distribution $\mathbf{d}$.
Then, the structure that minimizes the sensitivity to multiplicative crosstalk is assortative in the vertex degrees.
\label{AssortativeDegreeDistributionTheorem}
\end{theorem}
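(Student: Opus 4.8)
The plan is to convert the minimization of $K$ over all graphs with the prescribed degree sequence $\mathbf{d}$ into a rearrangement problem and then eliminate every disassortative pair by the degree-preserving rewiring of Figure \ref{FirstTransformation}. The first and most useful observation is that, once $\mathbf{d}$ is fixed, the sum of $f(d_u)f(d_v)$ over \emph{all} unordered pairs of distinct vertices is a structure-independent constant $C=\frac{1}{2}\big[\,(\sum_{u}f(d_u))^{2}-\sum_{u}f(d_u)^{2}\,\big]$. Splitting these pairs into edges and non-edges gives $C=S_{\mathcal{E}}+K$ with $S_{\mathcal{E}}=\sum_{(u,v)\in\mathcal{E}}f(d_u)f(d_v)$, so minimizing $K$ is \emph{exactly equivalent} to maximizing $S_{\mathcal{E}}$. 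Maximizing $S_{\mathcal{E}}$ means placing edges so that they join vertices whose affinities are simultaneously large or simultaneously small, which is the precise analytic content of assortative mixing; the theorem therefore reduces to showing that a maximizer of $S_{\mathcal{E}}$ contains no crossed pair of edges.

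Because only finitely many labeled graphs share the degree sequence $\mathbf{d}$, a minimizer $\mathcal{G}^{*}$ of $K$ exists. I would then argue by contradiction. Call a pair of edges $(a,x),(b,y)\in\mathcal{E}$ a \emph{crossing} if $d_{a}<d_{b}$, $d_{x}>d_{y}$, $x\in\mathcal{N}_{a}\setminus\mathcal{N}_{b}$ and $y\in\mathcal{N}_{b}\setminus\mathcal{N}_{a}$. Suppose $\mathcal{G}^{*}$ has a crossing. The choices $x\notin\mathcal{N}_{b}$ and $y\notin\mathcal{N}_{a}$ guarantee that $(a,y)$ and $(b,x)$ are absent, so the rewiring that deletes $(a,x),(b,y)$ and inserts $(a,y),(b,x)$ is legal and preserves every degree. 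Exactly as in the proof of Lemma \ref{MultiplicativeDegreeCorrelations}, its effect is $\Delta K=(f(d_{b})-f(d_{a}))(f(d_{y})-f(d_{x}))$, and since $f$ is strictly increasing with the degrees crossed we get $f(d_{b})-f(d_{a})>0$ and $f(d_{y})-f(d_{x})<0$, hence $\Delta K<0$. This contradicts the optimality of $\mathcal{G}^{*}$, so $\mathcal{G}^{*}$ has no crossing; equivalently, whenever $d_{a}\leq d_{b}$ every private neighbor of $a$ has degree at most that of every private neighbor of $b$, which is the assortativity claimed.

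The step I expect to be the main obstacle is pinning down the right notion of assortativity and, relatedly, making sure the uncrossing operation is always available. Defining a crossing through the \emph{private} neighborhoods $\mathcal{N}_{a}\setminus\mathcal{N}_{b}$ and $\mathcal{N}_{b}\setminus\mathcal{N}_{a}$ is what sidesteps the legality issue, since it forces the two new edges to be non-edges and the four vertices to be distinct; a cruder definition would run into blocked swaps (a target edge already present) and would require either an extremal choice of $x$ and $y$ or an appeal to the connectivity of the space of fixed-degree graphs under $2$-swaps. I would also want to confirm that ``no crossing'' is genuinely the statement the theorem intends by assortativity: the equivalence $C=S_{\mathcal{E}}+K$ makes this transparent, because a crossing is precisely a local move that strictly raises $S_{\mathcal{E}}$, so a crossing-free graph is one in which the degree--degree correlation along edges cannot be locally increased. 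Finally, if one prefers a constructive statement, the same computation shows that repeatedly uncrossing strictly increases the nonnegative quantity $S_{\mathcal{E}}$ (and strictly decreases $K$), so the procedure terminates at an assortative graph attaining the minimum.
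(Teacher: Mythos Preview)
Your argument is correct and is organized more economically than the paper's. Both proofs rest on degree-preserving $2$-swaps and the factored expression $\Delta K=(f(d_{b})-f(d_{a}))(f(d_{y})-f(d_{x}))$, but the paper parameterizes the swap as in Figure~\ref{ConfigurationModel} (two disjoint edges $(A,B),(C,D)$ on an induced four-vertex subgraph), tabulates all six admissible degree orderings in Table~\ref{All6Transformations}, and reads off which configuration is stable, afterwards checking that the transformations compose additively so that local stability is enough. You instead work directly with a global minimizer (which exists by finiteness), apply the swap of Figure~\ref{FirstTransformation} to a hypothetical crossing, and obtain a contradiction in one line; this bypasses the case table and the commutation discussion entirely. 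Your complementarity identity $K=C-S_{\mathcal{E}}$ with $C$ fixed by $\mathbf{d}$ does not appear in the paper and is a genuine conceptual gain: it makes precise why ``minimum $K$'' and ``maximum edge degree--degree correlation'' are the same problem, and it gives the termination of iterated uncrossing for free. One small point to tighten: the private-neighborhood conditions $x\in\mathcal{N}_{a}\setminus\mathcal{N}_{b}$ and $y\in\mathcal{N}_{b}\setminus\mathcal{N}_{a}$ do not by themselves force all four vertices to be distinct (for instance, if $(a,b)\in\mathcal{E}$ then $b\in\mathcal{N}_{a}\setminus\mathcal{N}_{b}$, so $x=b$ is not excluded and the swap would create a self-loop); simply add $x\neq b$ and $y\neq a$ to the definition of a crossing.
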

\begin{proof}
Suppose we have a network, where vertices $A$ and $B$ are connected, vertices $C$ and $D$ are also connected, and there are no more edges among them.
One way to change the structure without changing the degree sequence of the graph is shown in Figure \ref{ConfigurationModel}.
This method (first described in \cite{MolloyReed1995}) keeps the degrees of each vertex constant.
\begin{figure}[htbp]
\centering
\psscalebox{.45}{
\begin{pspicture}(-1,-3)(15,15)
{
\cnodeput(0,8){A}{\strut\boldmath$A$}
\cnodeput(3,8){B}{\strut\boldmath$B$}
\cnodeput(0,5){C}{\strut\boldmath$C$}
\cnodeput(3,5){D}{\strut\boldmath$D$}
\cnodeput(10,13){E}{\strut\boldmath$A$}
\cnodeput(13,13){F}{\strut\boldmath$B$}
\cnodeput(10,10){G}{\strut\boldmath$C$}
\cnodeput(13,10){H}{\strut\boldmath$D$}
\cnodeput(10,3){I}{\strut\boldmath$A$}
\cnodeput(13,3){J}{\strut\boldmath$B$}
\cnodeput(10,0){K}{\strut\boldmath$C$}
\cnodeput(13,0){L}{\strut\boldmath$D$}
}
\ncline{-}{A}{B}
\ncline{-}{C}{D}
\ncline{-}{E}{H}
\ncline{-}{F}{G}
\ncline{-}{I}{K}
\ncline{-}{J}{L}
\pnode(4.2,7){T1start}
\pnode(8.5,11.5){T1end}
\pnode(4.2,6){T2start}
\pnode(8.5,1.5){T2end}
\pnode(11.5,8.5){T3start}
\pnode(11.5,4.5){T3end}

\ncline[linewidth=2pt]{->}{T1start}{T1end}
\rput(6,10){\rnode{TR1}{\Large{$T_{1}$}}}
\ncline[linewidth=2pt]{->}{T2start}{T2end}
\rput(6,3){\rnode{TR2}{\Large{$T_{2}$}}}
\ncline[linewidth=2pt]{->}{T3start}{T3end}
\rput(12.5,6.5){\rnode{TR3}{\Large{$T_{3}$}}}

\pspolygon[fillstyle=solid,linearc=14pt,cornersize=absolute,linewidth=1.5pt, opacity=0.3, linestyle=dashed]
(!\psGetNodeCenter{A} A.x 1 sub A.y 1 add)
(!\psGetNodeCenter{B} B.x 1 add B.y 1 add)
(!\psGetNodeCenter{D} D.x 1 add D.y 1 sub)
(!\psGetNodeCenter{C} C.x 1 sub C.y 1 sub)
\pspolygon[fillstyle=solid,linearc=14pt,cornersize=absolute,linewidth=1.5pt, opacity=0.3, linestyle=dashed]
(!\psGetNodeCenter{E} E.x 1 sub E.y 1 add)
(!\psGetNodeCenter{F} F.x 1 add F.y 1 add)
(!\psGetNodeCenter{H} H.x 1 add H.y 1 sub)
(!\psGetNodeCenter{G} G.x 1 sub G.y 1 sub)
\pspolygon[fillstyle=solid,linearc=14pt,cornersize=absolute,linewidth=1.5pt, opacity=0.3, linestyle=dashed]
(!\psGetNodeCenter{I} I.x 1 sub I.y 1 add)
(!\psGetNodeCenter{J} J.x 1 add J.y 1 add)
(!\psGetNodeCenter{L} L.x 1 add L.y 1 sub)
(!\psGetNodeCenter{K} K.x 1 sub K.y 1 sub)
\end{pspicture}
}
\caption{A rewiring method that keeps the degree of each vertex constant. Assume that vertices $A$ and $B$ are connected, as are $C$ and $D$. There are no other edges in this induced subgraph. We can rearrange the edges by connecting $A$ with $D$ and $B$ with $C$, or by connecting $A$ with $C$ and $B$ with $D$. Each transformation can be reversed, and we can go from any subgraph to another in one step.}
\label{ConfigurationModel}
\end{figure}
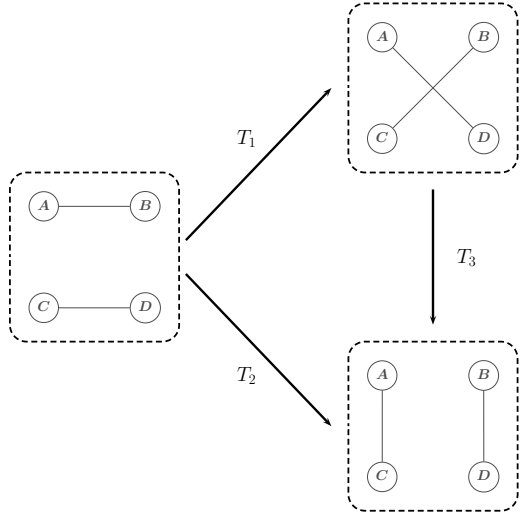
Initially, the overall crosstalk sensitivity among the four vertices $A, B, C$ and $D$ is
\begin{equation}
M_{0}=f(d_{A})f(d_{C})+f(d_{A})f(d_{D})+f(d_{B})f(d_{C})+f(d_{B})f(d_{D}).
\end{equation}
After transformation $T_{1}$ (see Figure \ref{ConfigurationModel} for details), the overall crosstalk sensitivity among the same vertices is 
\begin{equation}
M_{1}=f(d_{A})f(d_{B})+f(d_{A})f(d_{C})+f(d_{B})f(d_{D})+f(d_{C})f(d_{D})
\end{equation}
and similarly, if we apply the second transformation $T_{2}$
\begin{equation}
M_{2}=f(d_{A})f(d_{B})+f(d_{A})f(d_{D})+f(d_{B})f(d_{C})+f(d_{C})f(d_{D}).
\end{equation}
Consequently, the difference of the overall crosstalk sensitivity of the network if we apply $T_{1}$ is 
\begin{equation}
\Delta K_{T_{1}}=f(d_{A})f(d_{B})-f(d_{A})f(d_{D})-f(d_{B})f(d_{C})+f(d_{C})f(d_{D})
\end{equation}
\begin{equation}
\Delta K_{T_{1}}=\left(  f(d_{C}) -f(d_{A}) \right) \left(  f(d_{D}) -f(d_{B}) \right).
\end{equation}
If we apply the transformation $T_{2}$
\begin{equation}
\Delta K_{T_{2}}=f(d_{A})f(d_{B})-f(d_{A})f(d_{C})-f(d_{B})f(d_{D})+f(d_{C})f(d_{D})
\end{equation}
\begin{equation}
\Delta K_{T_{2}}=\left(  f(d_{C}) -f(d_{B}) \right) \left(  f(d_{D}) -f(d_{A}) \right).
\label{SecondConfModelTransformation}
\end{equation}
We order the degrees of the four vertices (there are $4!=24$ possible orderings), and without loss of generality, we require that 
\begin{equation}
d_{A}\leq d_{B} \quad \textrm{ and } \quad d_{C}\leq d_{D}
\end{equation}
which reduces the number of possible orderings to $6$, shown in Table \ref{All6Transformations}.
\begin{table}[htbp]
\begin{center}
\begin{tabular}{|c|c|c|}
\hline 
Degree Order & $\Delta K_{T_{1}}$ & $\Delta K_{T_{2}}$ \\
\hline
$d_{A} \leq d_{B} \leq d_{C} \leq d_{D}$ & $ \geq 0$ & $ \geq 0$ \\
\hline
$d_{A} \leq d_{C} \leq d_{B} \leq d_{D}$ & $ \leq 0$ & $ \leq 0$ \\
\hline
$d_{A} \leq d_{C} \leq d_{D} \leq d_{B}$ & $ \leq 0$ & $ \leq 0$ \\
\hline
$d_{C} \leq d_{A} \leq d_{B} \leq d_{D}$ & $ \leq 0$ & $ \leq 0$ \\
\hline
$d_{C} \leq d_{A} \leq d_{D} \leq d_{B}$ & $ \leq 0$ & $ \geq 0$ \\
\hline
$d_{C} \leq d_{D} \leq d_{A} \leq d_{B}$ & $ \geq 0$ & $ \geq 0$ \\
\hline
\end{tabular}
\caption{All possible orderings of the degrees of vertices $A$, $B$, $C$ and $D$, with $d_{A}\leq d_{B}$ and $d_{C}\leq d_{D}$, and the resulting difference in the overall crosstalk sensitivity among them for transformations $T_{1}$ and $T_{2}$ (see also Figure \ref{ConfigurationModel}).  }
\label{All6Transformations}
\end{center}
\end{table}
If both transformations result in an increase in the overall crosstalk sensitivity, this means that the current arrangement of edges ($(A,B)$ and $(C,D)$) is optimal. 
This happens when one of the following inequalities hold:
\begin{equation}
d_{A}\leq d_{B} \leq d_{C} \leq d_{D}
\end{equation}
\begin{equation}
d_{C}\leq d_{D} \leq d_{A} \leq d_{B}.
\end{equation}

Also note that transformation $T_{2}$ always yields a network with smaller overall crosstalk sensitivity than transformation $T_{1}$, because
\begin{align}
\Delta K_{T_{3}} =&M_{2}-M_{1} \notag \\
=&f(d_{A})f(d_{D})+f(d_{B})f(d_{C})-f(d_{A})f(d_{C})-f(d_{B})f(d_{D}) \notag \\
=&(f(d_{A})-f(d_{B}))(f(d_{D})-f(d_{C})) \\
\leq& 0 \notag
\end{align}
by the assumption of the relative degrees of the pairs $(A,B)$ and $(C,D)$.
More importantly, 
\begin{align}
M_{2} =&M_{0}+\Delta K_{T_{2}} \notag \\
=&M_{0}+ \Delta K_{T_{1}}+\Delta K_{T_{3}} \\
=&M_{1}+\Delta K_{T_{3}}. \notag
\end{align}
This shows that the overall sensitivity of a graph does not depend on the order in which we apply any transformations.
The difference in the overall crosstalk sensitivity is equal to the sum of individual differences,  and the only thing that matters is the structure of the network before and after the performed changes.
In order to minimize the overall sensitivity to crosstalk, we need to connect vertices with large degrees with other vertices with large degrees and similarly for vertices with small degrees.
\end{proof}

\begin{lemma}
A degree sequence $\mathbf{d}=[d_{1}, d_{2}, \dots d_{N}]$ is a graphic sequence if and only if the degree sequence $\mathbf{d}'=[d_{2}-1, d_{3}-1,\dots ,d_{d_{1}+1}-1,d_{d_{1}+2},\dots ,d_{N-1}, d_{N}]$ is a graphic sequence.
\end{lemma}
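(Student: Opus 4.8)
This is the Havel--Hakimi theorem, so the plan is to prove both implications, assuming throughout that the sequence is ordered so that $d_{1} \geq d_{2} \geq \dots \geq d_{N}$. This is the ordering under which the stated construction of $\mathbf{d}'$ amounts to removing the vertex of largest degree and decrementing the $d_{1}$ next-largest degrees, so it is the natural convention here.

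For the easier ``if'' direction, I would assume $\mathbf{d}'$ is graphic and build an explicit realization of $\mathbf{d}$. Take any graph $H$ on vertices $v_{2}, \dots, v_{N}$ realizing $\mathbf{d}'$, adjoin a fresh vertex $v_{1}$, and join $v_{1}$ to exactly the vertices $v_{2}, \dots, v_{d_{1}+1}$ whose target degrees were decremented. This raises each of those degrees back by one, restoring $d_{2}, \dots, d_{d_{1}+1}$; it leaves the degrees of $v_{d_{1}+2}, \dots, v_{N}$ untouched; and it gives $v_{1}$ degree exactly $d_{1}$. The result realizes $\mathbf{d}$, so $\mathbf{d}$ is graphic.

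The substantive direction is ``only if''. Here I would start from an arbitrary realization $G$ of $\mathbf{d}$ and reduce it, without altering any degree, to a \emph{canonical} realization in which $v_{1}$ is adjacent to exactly the $d_{1}$ vertices of next-largest degree, namely $v_{2}, \dots, v_{d_{1}+1}$. Once such a realization is available, deleting $v_{1}$ subtracts exactly one from the degree of each of $v_{2}, \dots, v_{d_{1}+1}$ and leaves all other degrees fixed, yielding a graph on $v_{2}, \dots, v_{N}$ that realizes $\mathbf{d}'$; hence $\mathbf{d}'$ is graphic.

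The main obstacle, and the crux of the argument, is producing that canonical realization through degree-preserving edge swaps of the type shown in Figure \ref{ConfigurationModel}. If $v_{1}$ is not already adjacent to all of $v_{2}, \dots, v_{d_{1}+1}$, then since $v_{1}$ has degree $d_{1}$ there is an index $j \leq d_{1}+1$ with $v_{1} \notslash v_{j}$ and an index $k > d_{1}+1$ with $v_{1} - v_{k}$. Because the sequence is non-increasing, $d_{j} \geq d_{k}$; as $v_{k}$ (but not $v_{j}$) already spends one edge on $v_{1}$, comparing the neighborhoods of $v_{j}$ and $v_{k}$ shows there must exist a vertex $w \neq v_{1}$ with $w - v_{j}$ and $w \notslash v_{k}$, for otherwise $d_{k}$ would strictly exceed $d_{j}$. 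I would then perform the $2$-switch deleting $v_{1}v_{k}$ and $v_{j}w$ and inserting $v_{1}v_{j}$ and $v_{k}w$: every degree is preserved and the number of neighbors of $v_{1}$ inside $\{v_{2}, \dots, v_{d_{1}+1}\}$ strictly increases. Iterating this swap at most $d_{1}$ times terminates at the canonical realization, completing the proof. The only delicate point is the existence of the swap partner $w$, which the neighborhood-counting argument above settles.
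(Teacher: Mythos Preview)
Your argument is the standard proof of the Havel--Hakimi theorem and is correct. The paper itself does not prove this lemma at all: its ``proof'' is the single line \emph{See \cite{DegreeRealizability}}, so you have supplied precisely what the paper defers to the literature.

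Two brief remarks. First, you are right to fix the convention $d_{1} \geq d_{2} \geq \dots \geq d_{N}$ at the outset; the lemma as stated is false under the increasing order that Algorithm~\ref{ConnectedGraphicalSequenceAlgorithm} uses for its input, so flagging this explicitly is not pedantry but a genuine correction. Second, the one step that deserves a line more of care is the existence of the swap partner $w \neq v_{1}, v_{k}$ in the ``only if'' direction. Your neighborhood-counting sketch is sound, but to make it airtight one splits on whether $v_{j} \notslash v_{k}$ or $v_{j} - v_{k}$: in the first case $\mathcal{N}_{v_{j}} \subseteq \mathcal{N}_{v_{k}}$ would force $d_{j} \leq d_{k}$ and hence equality of the two neighborhoods, contradicting $v_{1} \in \mathcal{N}_{v_{k}} \setminus \mathcal{N}_{v_{j}}$; in the second case $\mathcal{N}_{v_{j}} \setminus \{v_{k}\} \subseteq \mathcal{N}_{v_{k}}$ would give $d_{k} \geq (d_{j}-1) + 2 > d_{j}$, again a contradiction. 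Either way a valid $w$ exists, the $2$-switch is a proper one on four distinct vertices, and your termination argument goes through.
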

\begin{proof}
See \cite{DegreeRealizability}.
\end{proof}

\begin{algorithm}
\caption{Find the Connected Graph $\mathcal{G}$ with Minimum Sensitivity to Multiplicative Crosstalk}
\label{ConnectedGraphicalSequenceAlgorithm}
\begin{algorithmic}
\REQUIRE A graphic sequence $\mathbf{d}=[d_{1}, \dots d_{N}]$ with $d_{1}\leq d_{2}\leq \dots \leq d_{N}$.
\ENSURE The graph with the lowest sensitivity to crosstalk.
\STATE
\STATE $\mathcal{G} \leftarrow$ Empty Graph with $\mathcal{V}(\mathcal{G})$={1,2,\dots N}
\STATE $k \leftarrow 0$
\WHILE{$\mathbf{d} \neq [0]$}
\STATE $k \leftarrow k+1$
\STATE $r \leftarrow$ First element of $\mathbf{d}$
\STATE Connect vertex $k$ with vertices $k+1,k+2,\dots , k+r$
\STATE $\mathbf{d} \leftarrow [d_{2}-1, d_{3}-1, \dots , d_{r+1}-1, d_{r+2}, \dots, d_{N-1}, d_{N}]$
\ENDWHILE
\STATE
\WHILE{$\mathcal{G}$ is not connected}
\STATE Find the two components $\mathcal{F},\mathcal{S}$ with the smallest average degrees , $\bar{d}_{\mathcal{F}}\leq \bar{d}_{\mathcal{S}}$.
\STATE Pick vertices $A,B$ with the largest degree in $\mathcal{F}$ with $d_{A}\leq d_{B}$.
\STATE Pick vertices $C,D$ with the largest degree in $\mathcal{S}$ with $d_{C}\leq d_{D}$.
\STATE Add edges $(A,C)$ and $(B,D)$; Remove edges $(A,B)$ and $(C,D)$
\ENDWHILE
\STATE
\STATE Return $\mathcal{G}$
\end{algorithmic}
\end{algorithm}

\begin{theorem}
Algorithm \ref{ConnectedGraphicalSequenceAlgorithm} returns a connected network with the lowest possible overall sensitivity to multiplicative crosstalk.
\end{theorem}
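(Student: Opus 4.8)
The plan is to first recast the objective in a form adapted to a fixed degree sequence. Since $\mathbf{d}$ is fixed, the sum $\sum_{\{u,v\}} f(d_u)f(d_v)$ over \emph{all} pairs is a constant, so by equation (\ref{DefinitionMultiplicativeEquation}) minimizing $K$ is equivalent to maximizing the edge weight $W(\mathcal{G})=\sum_{(u,v)\in\mathcal{E}} f(d_u)f(d_v)$. I would run the whole argument in terms of maximizing $W$, because the moves of Theorem \ref{AssortativeDegreeDistributionTheorem} are exactly the degree-preserving rewirings that change $W$, and their effect is additive and path-independent: the identity $M_2=M_1+\Delta K_{T_3}$ proved there shows that the change in $K$ depends only on the initial and final configurations and not on the order in which the moves are applied.

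First I would show that the graph $\mathcal{G}_0$ produced by the first \textbf{while} loop is a valid realization of $\mathbf{d}$ and maximizes $W$ over all, not necessarily connected, realizations. Validity is immediate from the cited Havel--Hakimi lemma applied repeatedly; I would also note that subtracting $1$ from $d_2,\dots,d_{r+1}$ leaves the sequence sorted, so vertex $k$ is always joined to a consecutive block of the currently smallest-degree vertices and no re-sorting is needed. Because each vertex is joined only to vertices adjacent to it in the sorted order, $\mathcal{G}_0$ is assortative, and I would verify it admits no crosstalk-decreasing move by checking that every pair of its edges falls into one of the two admissible degree orderings of Table \ref{All6Transformations} (namely $d_A\le d_B\le d_C\le d_D$ or $d_C\le d_D\le d_A\le d_B$). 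Combined with the path-independence above, which excludes local minima exactly as in the reordering argument of the additive case, this gives the bound $W(\mathcal{G})\le W(\mathcal{G}_0)$ for every realization of $\mathbf{d}$, connected or not.

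Next I would analyze the second \textbf{while} loop. Each iteration deletes $(A,B)$ and $(C,D)$ and inserts $(A,C)$ and $(B,D)$: this is precisely transformation $T_2$, it preserves every vertex degree, and since $A,B$ are the two highest-degree vertices of $\mathcal{F}$ and $C,D$ those of $\mathcal{S}$, Lemma \ref{MultiplicativeConnectToLargestDegrees} guarantees that $(A,B)$ and $(C,D)$ are genuine edges whose endpoints share enough core neighbors that their deletion cannot disconnect $\mathcal{F}$ or $\mathcal{S}$, so the new cross edges merge the two components and the loop strictly reduces the component count until the graph is connected. Choosing $(A,C),(B,D)$ realizes $T_2$ rather than $T_1$, which by $\Delta K_{T_3}\le 0$ is the cheaper of the two ways to bridge the components.

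The hard part is to show that this greedy bridging — always merging the two components of smallest average degree through their highest-degree vertices — produces the connected realization of \emph{globally} minimum crosstalk, and not merely a connected local optimum. My plan is an exchange argument: using path-independence of $\Delta K$, I would compare $\mathcal{G}^*$ with an arbitrary connected realization $\mathcal{G}$ by expressing $K(\mathcal{G})-K(\mathcal{G}^*)$ as a sum of $T_2$-move contributions, and use the concavity of $f$ to prove that widening the degree gap across a bridge — by merging components farther apart in average degree, or by bridging through lower-degree vertices — can only increase $\Delta K=(f(d_C)-f(d_B))(f(d_D)-f(d_A))$, so the algorithm's choices have least possible cost. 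The delicate point I expect to require the most care is that the connectivity constraint forbids reordering moves freely, since a crosstalk-decreasing swap may disconnect the graph; thus the clean ``perform all decreasing moves first'' argument of the additive case does not transfer verbatim, and one must argue separately that no connectivity-preserving sequence of swaps can undercut the greedy bridges.
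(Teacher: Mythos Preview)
Your plan is considerably more elaborate than the paper's argument, which is quite terse. The paper does not reformulate via $W(\mathcal{G})=\sum_{(u,v)\in\mathcal{E}}f(d_u)f(d_v)$; it simply asserts that the first loop ``by construction'' produces the most assortative realization and hence the minimum-crosstalk (possibly disconnected) graph, appealing only to Theorem~\ref{AssortativeDegreeDistributionTheorem}. For the second loop the paper treats two components with sorted degree sequences $[d_1,\dots,d_c]$ and $[d_{c+1},\dots,d_N]$, takes the bridge through the vertices of degrees $d_{c-1},d_c,d_{c+1},d_{c+2}$ straddling the boundary, computes $\Delta_c=(f(d_{c+1})-f(d_c))(f(d_{c+2})-f(d_{c-1}))$, and declares this ``the smallest possible positive difference while connecting the two components''---a purely local, per-merge claim---then iterates over components. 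It never mounts the global exchange argument you identify as the hard part; in effect the paper stops where you say the delicate work begins. Your $W$-maximization viewpoint and the path-independence observation are genuinely cleaner ways to handle the first loop, though they are not needed for the paper's level of rigor.

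One concrete error in your plan: Lemma~\ref{MultiplicativeConnectToLargestDegrees} is about the \emph{unconstrained} minimum-crosstalk graph (where the degree sequence is free), not about an arbitrary fixed-degree-sequence realization, so it cannot be invoked to certify that the top-degree vertices $A,B$ of a component $\mathcal{F}$ of $\mathcal{G}_0$ are adjacent, nor that deleting $(A,B)$ leaves $\mathcal{F}$ connected. That $(A,B)\in\mathcal{E}$ can instead be read off the Havel--Hakimi construction directly, but the non-disconnection claim has no support in the cited lemma and the paper does not address it either.
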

\begin{proof}
By construction, the first part of the algorithm produces the network with the most assortative structure possible, and therefore the network with the smallest possible overall crosstalk sensitivity.
The only problem would be if this network is not connected.
In this case, we need to carry out one rewiring that will have minimum impact to the overall sensitivity.
It suffices to prove that the algorithm works when the graph has two unconnected components, since this procedure can easily be generalized for an arbitrary number of components.
Suppose that the first component $\mathcal{F}$ has degree sequence $d_{\mathcal{F}}=[d_{1}, d_{2}, \dots ,d_{c}]$ and the second component $\mathcal{S}$ has degree sequence $d_{\mathcal{S}}=[d_{c+1}, d_{c+2}, \dots ,d_{N}]$.
In order to make the graph connected, we need to apply the transformation $T_{2}$ of Figure \ref{ConfigurationModel}, to rewire one edge in $\mathcal{F}$ and one edge in $\mathcal{S}$ so that they connect vertices in the different components.
The difference of the overall crosstalk sensitivity is given in equation  (\ref{SecondConfModelTransformation}), where the four vertices have degrees
\begin{equation}
d_{c-1}\leq d_{c} \leq d_{c+1} \leq d_{c+2}.
\end{equation}
The difference will be
\begin{equation}
\Delta _{c}=(f(d_{c+1})-f(d_{c}))(f(d_{c+2})-f(d_{c-1})),
\end{equation}
which is the smallest possible positive difference while connecting the two components.
Repeating this procedure for all the components of the graph, we end up with a connected graph, which has the minimum possible additional overall sensitivity to crosstalk compared to the unconnected case.
\end{proof}

\FloatBarrier

\end{document}